\definecolor{czerwony}{HTML}{ec0a25}
\definecolor{zielony}{HTML}{00c000}
\definecolor{niebieski}{HTML}{68619b}
\definecolor{szary}{HTML}{5c5757}
\definecolor{pomaranczowy}{HTML}{8c0000} 
\definecolor{Szary}{HTML}{383535}
\definecolor{Pink}{rgb}{1.,0.75,0.8}
\newcommand{\N}{{\mathbb N}}
\newcommand{\NN}{{\frac{1}{2}\mathbb N}}
\newcommand{\omicron}{{\cal I}}
\newcommand{\Hil}{{\mathcal H}}
\newtheorem{lm}{Lemma}
\newtheorem{thm}[lm]{Theorem}
\newtheorem{df}[lm]{Definition}
\begin{document}

\title{The kernel and the injectivity of the EPRL map} 

\author{Wojciech Kami\'nski${}^1$, Marcin Kisielowski${}^2$, Jerzy Lewandowski${}^2$}

\affiliation{${}^1$Max-Planck-Institut für Gravitationsphysik (Albert-Einstein-Institut),
 Am Mühlenberg 1 D-14476 Golm, Germany\\
${}^2$Instytut Fizyki Teoretycznej, Uniwersytet Warszawski,
ul. Ho{\.z}a 69, 00-681 Warszawa (Warsaw), Polska (Poland)
}

\begin{abstract} \noindent{\bf Abstract\ }
In this paper we prove injectivity of the EPRL map for $|\gamma|<1$, filling the gap of our previous paper. 
\end{abstract}

\maketitle

\section{Introduction}
The Engle-Pereira-Rovelli-Livine (EPRL) map \cite{EPRL} (see also \cite{FK,SFLQG,cEPRL}) is used to define the spin foam  amplitudes between the states of Loop Quantum Gravity \cite{LQGrevsbooks,Baezintro,perez}. The states are labelled by the SU(2) invariants, whereas the gauge group of the EPRL model is Spin(4) or Spin(3,1) depending on the considered spacetime signature. The EPRL map carries the invariants of the tensor products of SU(2) representations into the invariants of the tensor products of Spin(4), or Spin(3,1) representations. Those  LQG states which are not in the domain of or happen to be annihilated by the EPRL map are not given a chance to play a role in the physical Hilbert space. Therefore, it is important to understand which states of LQG are not annihilated.  In the Spin(4) case this issue is particularly subtle, because both the SU(2) representations as well as the Spin(4) representations are labelled by elements of $\frac{1}{2}\N$ , the EPRL map involves rescaling
by constants depending on the Barbero-Immirzi parameter $\gamma$, and the labels (taking values in $\NN$) before and after the map have to sum to an integer. The `injectivity' we prove in the current paper means, that given a necessarily rational value  $\gamma\in \mathbb{Q}$, for every $k_1,...,k_n$ -- n-tuple of elements of $\NN$ -- the EPRL map defined in the space of invariants Inv${\Hil_{k_1}\otimes...\otimes\Hil_{k_n}}$ is injective unless the  target Hilbert space of the corresponding Spin(4) invariants is trivial. The issue of the injectivity of the EPRL map has been raised in \cite{SFLQG}
and \cite{cEPRL}. However, the assumption that ``the  target Hilbert space of the corresponding Spin(4) invariants is not trivial'' was overlooked there. After adding this assumption, the proof presented in \cite{SFLQG} for $\gamma \ge 1$ works without
any additional corrections. Hence, in the current paper we consider only the case of $|\gamma| < 1$. In this case, the theorem formulated in \cite{cEPRL} is true if we additionally assume that the values of $\gamma=\frac{p}{q}$ are such that non of the relatively primary numbers $p$ or $q$ is even. In the current paper we formulate and prove an injectivity theorem valid for every $\gamma\in\mathbb{Q}$, and provide a proof for $|\gamma|<1$.

\setlength{\unitlength}{1cm}

\section{The EPRL map, the missed states, the annihilated states and statement of the result}\label{Sec_inject}
\subsubsection{Definition of the EPRL map}
Let $(j,k,l)\in\NN\times\NN\times\NN$ satisfy triangle inequalities and $j+k+l\in \N$. We denote by $C_{j}^{k l}$ the natural isometric embedding $\Hil_{j}\to \Hil_k \otimes \Hil_l$ and by $C^{j}_{k l}$ the adjoint operator. In the index notation we omit $j, k, l$, e.g. $C^{A_1}_{A_2 A_3}:=(C^{j}_{k l})^{A_1}_{A_2 A_3}$.

Let $k_i \in\NN$, $i\in\{1,\ldots,n\}$. We denote by ${\rm Inv}\left(\Hil_{k_1}\otimes\cdots\otimes\Hil_{k_n}\right)$ the subspace of $\Hil_{k_1}\otimes\cdots\otimes\Hil_{k_n}$ consisting of tensors invariant under the action of $SU(2)$ group.

\begin{df}\label{mapa_EPRL}
Given $\gamma\in\mathbb{Q}$ and $k_i \in\NN$, $i\in\{1,\ldots,n\}$ such that $\forall_i\ \ j^\pm_i:=\frac{|1\pm\gamma|}{2} k_i\in \NN$, the Engle-Pereira-Rovelli-Livine map
\begin{align}
\label{Def}
&\iota_{k_1\ldots k_n}\colon  {\rm Inv}\left(\Hil_{k_1}\otimes\cdots\otimes\Hil_{k_n}\right)\rightarrow {\rm Inv}\left(\Hil_{j^+_1}\otimes\cdots\otimes\Hil_{j^+_n}\right)\otimes {\rm Inv}\left(\Hil_{j^-_1}\otimes\cdots\otimes\Hil_{j^-_n}\right)
\end{align}
is defined as follows \cite{EPRL, SFLQG}:
\begin{align}
&\iota_{k_1\ldots k_n}(\omicron)^{A_1^+\ldots A_n^+ A_1^-\ldots A_n^-}=
\omicron^{A_1\ldots A_n} C_{A_1}^{B_1^+ B_1^-}\cdots C_{A_n}^{B_n^+ B_n^-} {P^+}_{B_1^+\ldots B_n^+}^{A_1^+\ldots A_n^+} {P^-}^{A^-_1 \ldots A_n^-}_{B_1^-\ldots B_n^-}\nonumber,
\end{align}
where $P^+:\Hil_{j^+_1}\otimes\cdots\otimes\Hil_{j^+_n}\to {\rm Inv}\left(\Hil_{j^+_1}\otimes\cdots\otimes\Hil_{j^+_n} \right)$, $P^-:\Hil_{j^-_1}\otimes\cdots\otimes\Hil_{j^-_n}\to {\rm Inv}\left(\Hil_{j^-_1}\otimes\cdots\otimes\Hil_{j^-_n}\right)$  are standing for the orthogonal projections.
\end{df}


\subsubsection{The  missed  states} 
Given a value of the Barbero-Immirzi parameter $\gamma$, the EPRL map is defined on an invariant space Inv$(\Hil_{k_1}\otimes...\otimes\Hil_{kn})$, only if 
the spins $k_1,...,k_n\in\frac{1}{2}\mathbb{N}$ are such that also each $\frac{|1\pm\gamma|}{2}k_1,...,\frac{|1\pm\gamma|}{2}k_n\in \frac{1}{2}\mathbb{N}$.  
That is why we are assuming that $\gamma$ is rational, 
$$ \gamma=\frac{p}{q},$$
where $p,q\in \mathbb{Z}$ and they relatively irrational (the fraction can not be
farther reduced.) If we need an explicit formula for $k\in\frac{1}{2}\mathbb{N}$ such that $\frac{|1\pm\gamma|}{2}k\in \frac{1}{2}\mathbb{N}$, we find two possible cases of $\gamma$ and the corresponding formulas for $k$:
\begin{itemize}
\item[(i)] both $p$ and $q$ odd $\Rightarrow$ $k=qs$ where $s\in\frac{1}{2}\mathbb{N}$, 
\item[(ii)] ($p$ even and $q$ odd) or ($p$ odd and $q$ even)  $\Rightarrow$ $k=2qs$ where $s\in\frac{1}{2}\mathbb{N}$. 
\end{itemize}
Invariants involving even one value of spin $k_i$ which is not that of (i), or, respectively,  (ii)
depending on $\gamma$, are not in the domain of the EPRL, hence they are missed
by the map.


\subsubsection{The annihilated states}
Suppose there is given a space of invariants Inv$(\Hil_{k_1}\otimes...\otimes\Hil_{k_n})$ such that each $k_1,...,k_n$
satisfies (i) or, respectively, (ii) above. 
Suppose also the space  is non-trivial, that is
\begin{align}
k_i\ &\le \sum_{i'\not= i}k_{i'}, \ \ \ \ \ i=1,..,n\label{ineqk}\\
\sum_{i}k_i\ &\in\ \mathbb{N}.\label{sumkint}
\end{align}
The target space of the EPRL map ${\rm Inv}\left(\Hil_{k_1}\otimes\cdots\otimes\Hil_{k_n}\right)\rightarrow {\rm Inv}\left(\Hil_{j^+_1}\otimes\cdots\otimes\Hil_{j^+_n}\right)\otimes {\rm Inv}\left(\Hil_{j^-_1}\otimes\cdots\otimes\Hil_{j^-_n}\right)$ is  nontrivial,
if and only if
 \begin{align}
j^\pm_i\ &\le \sum_{i'\not= i}\ j^\pm_i, \ \ \ \ \ i=1,..,n\label{ineqj}\\
\sum_{i}j^\pm_i\ &\in\ \mathbb{N}.\label{sumjint}
\end{align}
Whereas (\ref{ineqk}) does imply (\ref{ineqj}), the second condition 
\begin{equation}
\sum_{i}j^\pm_i\ =\ \frac{1\pm\gamma}{2}\sum_{i}k_i \in\mathbb{N}
\end{equation}  
is not automatically satisfied for arbitrary $\gamma$.

For example, let 
$$\gamma=\frac{1}{4}, \ \ \ \ k_1,k_2,k_3=4.$$
Certainly the space Inv$(\Hil_4\otimes\Hil_4\otimes\Hil_4)$ is non-empty. 
However, 
$$j^-_1,j^-_2,j^-_3=\frac{3}{2}, \ \ \ j^+_1,j^+_2,j^+_3=\frac{5}{2}$$
and 
$$ {\rm Inv}(\Hil_{\frac{3}{2}}\otimes\Hil_{\frac{3}{2}}\otimes\Hil_{\frac{3}{2}})\otimes
{\rm Inv}(\Hil_{\frac{5}{2}}\otimes\Hil_{\frac{5}{2}}\otimes\Hil_{\frac{5}{2}})\ =\ \{0\}\otimes\{0\}.$$ 
In other words, if $\gamma$ is $0.25$ (close to the ``Warsaw value'' $\gamma= .27...$ \cite{DL,M} usually assumed in the literature), then the EPRL map annihilates the SU(2) invariant corresponding to the spins $k_1=k_2=k_3=4$.     

Generally, for $\gamma$ and $k_1,...,k_n$ of the case (i) in the previous subsection, actually  (\ref{sumkint}) does imply (\ref{sumjint}). 
In the case (ii), on the other hand, there is a set of non-trivial subspaces
Inv$(\Hil_{k_1}\otimes...\otimes\Hil_{k_n})$ which are annihilated by the EPRL map
for the target ${\rm Inv}\left(\Hil_{j^+_1}\otimes\cdots\otimes\Hil_{j^+_n}\right)\otimes {\rm Inv}\left(\Hil_{j^-_1}\otimes\cdots\otimes\Hil_{j^-_n}\right)$  
is just the trivial space.                           
 
The theorem formulated below exactly states, that the EPRL map does not annihilate more states, then those characterised above.


\subsubsection{The injectivity theorem} 
\begin{thm}
\label{inj-EPRL}
Assume $\gamma\in\mathbb{Q}$. For $k_i \in\NN$, $i\in\{1,\ldots,n\}$ such that: 
\begin{itemize}
\item $\forall_i\ \ j^\pm_i:=\frac{1\pm\gamma}{2} k_i\in \NN$,
\item $\sum_{i=1}^n j^+_i \in \N$
\end{itemize}
the EPRL map defined above (def. \ref{mapa_EPRL}) is injective.
\end{thm}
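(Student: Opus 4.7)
My plan exploits the top-weight structure special to the $|\gamma|<1$ regime: the hypothesis forces $k_i=j_i^++j_i^-$, so $\Hil_{k_i}$ sits inside $\Hil_{j_i^+}\otimes\Hil_{j_i^-}$ as the highest-spin subrepresentation, and the intertwiner $C^{j_i^+ j_i^-}_{k_i}$ acts on SU(2) coherent states by the clean identity
\[
C\,|k_i,n\rangle\;=\;|j_i^+,n\rangle\otimes|j_i^-,n\rangle,\qquad n\in S^2.
\]
This identity, unavailable when $|\gamma|\geq 1$, will be the central input.

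\noindent\emph{Step 1 (coherent intertwiner basis).} I would pass to the Livine--Speziale coherent resolution of ${\rm Inv}(\Hil_{k_1}\otimes\cdots\otimes\Hil_{k_n})$: the SU(2)-averaged vectors $\omicron^{(\vec n)}:=P(\bigotimes_i|k_i,n_i\rangle)$ span this space as $\vec n\in(S^2)^n$ varies. Combining the equivariance of $C^{\otimes n}$ with the identity above gives
\[
\iota_{k_1\ldots k_n}(\omicron^{(\vec n)})\;=\;(P^+\otimes P^-)\Bigl(\bigotimes_i|j_i^+,n_i\rangle\otimes|j_i^-,n_i\rangle\Bigr),
\]
so the EPRL image of a coherent LQG intertwiner is exactly the Livine--Speziale coherent Spin$(4)$ intertwiner with matched $n_i^+=n_i^-=n_i$. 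Injectivity of $\iota_{k_1\ldots k_n}$ becomes a linear-independence statement on this overcomplete family.

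\noindent\emph{Step 2 (integral representation of the Gram form).} Injectivity is equivalent to strict positivity of $\iota^\dagger\iota$ on ${\rm Inv}(\Hil_{k_1}\otimes\cdots\otimes\Hil_{k_n})$, i.e.\ to $(P^+\otimes P^-)\tilde\omicron\neq 0$ for every nonzero $\tilde\omicron:=C^{\otimes n}\omicron$. Writing $P^\pm=\int_{SU(2)}dg_\pm\,g_\pm$ and using the diagonal SU(2) invariance of $\tilde\omicron$ (a consequence of the SU(2) invariance of $\omicron$ and the equivariance of $C^{\otimes n}$), one of the two Haar integrals collapses, and $\|\iota(\omicron)\|^2$ reduces to a single integral over SU(2) whose integrand at the group identity equals $\|\omicron\|^2$.

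\noindent\emph{Main obstacle.} The crux is to rule out cancellation of this strictly positive identity contribution by the integrand values at non-identity group elements. This is the exact step at which the argument of \cite{cEPRL} breaks down when $\gamma=p/q$ has $p$ or $q$ even: sign ambiguities from $\epsilon$-contractions in the recoupling propagate inconsistently across the basis of invariants. The hypothesis $\sum_i j_i^+\in\N$ of Theorem~\ref{inj-EPRL}, which is exactly the condition guaranteeing that the target is nontrivial, fixes the parity of all such $(-1)$-phases uniformly; I expect that once this is accounted for the integrand can be expressed as a positive-semidefinite Hermitian form, and the integral then bounded below by a standard coherent-state estimate.
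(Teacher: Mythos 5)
Your reduction in Steps 1--2 is correct as far as it goes, but it stops exactly where the problem begins. Using the diagonal invariance of $\tilde\omicron=C^{\otimes n}\omicron$ one of the two Haar integrals indeed collapses, and
\[
\|\iota(\omicron)\|^2=\int_{SU(2)}dg\,\bigl\langle\tilde\omicron,\bigl(\rho^{j^+_1}(g)\otimes\cdots\otimes\rho^{j^+_n}(g)\otimes 1\bigr)\tilde\omicron\bigr\rangle .
\]
But the integrand is a matrix coefficient of a unitary representation, so its Haar integral is \emph{automatically} nonnegative (it equals $\|(Q^+\otimes 1)\tilde\omicron\|^2$, with $Q^+$ the projector onto the invariants of the $+$ factors): Bochner positivity gives you $\geq 0$ for free and nothing more. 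Strict positivity is exactly equivalent to the statement you are trying to prove, and the value $\|\omicron\|^2$ of the integrand at $g=e$ does not bound the integral from below --- matrix coefficients of nontrivial irreducibles oscillate, and total cancellation really does occur precisely when ${\rm Inv}(\Hil_{j^+_1}\otimes\cdots\otimes\Hil_{j^+_n})\otimes{\rm Inv}(\Hil_{j^-_1}\otimes\cdots\otimes\Hil_{j^-_n})=\{0\}$ (e.g.\ $\gamma=\tfrac{1}{4}$, $k_1=k_2=k_3=4$ in Section II). So any ``standard coherent-state estimate'' would have to see the arithmetic hypothesis $\sum_i j^+_i\in\N$; you only \emph{expect} such an estimate to exist, and that expectation is the whole theorem. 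This is a genuine gap, not a routine verification left to the reader.

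For comparison, the paper does not attempt a positivity or Gram-matrix argument at all. It proceeds by induction on $n$: decompose $\omicron$ in a recoupling (tree) basis as in \eqref{inv_dec}, isolate the minimal intermediate spin $k'_\alpha$, choose $j^\pm_\alpha$ with $j^+_\alpha+j^-_\alpha=k'_\alpha$ so that $(j^\pm_1,j^\pm_2,j^\pm_\alpha)$ are admissible and $j^+_\alpha+j^+_1+j^+_2\in\N$ (Lemmas \ref{jeden}--\ref{trzy}; this is where the hypothesis $\sum_i j^+_i\in\N$ actually does its work), and then exhibit an explicit dual vector $\phi$ whose pairing with $\iota(\omicron)$ reduces to a single surviving term. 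The nonvanishing of that term rests on Lemma \ref{nonzero}, i.e.\ on the nonvanishing of a specific $9j$-symbol --- a concrete, checkable substitute for the lower bound your plan leaves unproven. To salvage your route you would need to show that $g\mapsto\langle\tilde\omicron,(\rho^+(g)\otimes 1)\tilde\omicron\rangle$ has nonzero mean for every nonzero invariant $\omicron$ under the stated arithmetic conditions, and at present you have supplied no mechanism for that.
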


Note that when ${\rm Inv}\left(\Hil_{k_1}\otimes\cdots\otimes\Hil_{k_n}\right)$ is trivial, injectivity trivially holds.   

In this article we use the following definition:
\begin{df}
A sequence $(k_1,k_2,\ldots,k_n)$, $k_i\in \NN$ is \textbf{admissible} iff the space ${\rm Inv}\left(\Hil_{k_1}\otimes\cdots\otimes\Hil_{k_n}\right)$ is nontrivial. This is equivalent to the conditions
\begin{equation}
 \forall i\ k_i\leq \sum_{j\not=i} k_j,\quad {\rm and} \quad \sum_i k_i\in \mathbb{N}\ .
\end{equation}
\end{df}

For $\gamma \ge 1$ the proof of the theorem \ref{inj-EPRL} is presented in \cite{SFLQG}. Here we present the proof for $|\gamma|<1$. In order to make the presentation clear, we divide it into sections. The main result is an inductive hypothesis stated and proved in section \ref{inductive_hyp}. The injectivity of EPRL map follows from that result. In the preceding section \ref{simpl_proof} we present proof of theorem restricted to certain intertwiners which we call tree-irreducible. We use it in the proof of the main result.

\section{Proof of the theorem in simplified case}\label{simpl_proof}
\subsection{Tree-irreducible case of inductive hypothesis}
\label{inj-simplified}
In the tree-irreducible case we restrict to intertwiners which we call tree-irreducible. We say that $\omicron\in {\rm Inv}\left(\Hil_{k_1}\otimes\cdots\otimes\Hil_{k_n}\right)$ is \textbf{tree-irreducible}, if for all $l\in\{1,\ldots,n-1\}$ the orthogonal projection $P^{l}:{\rm Inv}\left(\Hil_{k_1}\otimes\cdots\otimes\Hil_{k_n}\right)\to {\rm Inv}\left(\Hil_{k_1}\otimes\cdots\otimes\Hil_{k_l}\right)\otimes\left(\Hil_{k_{l+1}}\otimes\ldots\otimes\Hil_{k_n}\right)$ annihilates $\omicron$.

The inductive proof we present needs an extended notion of the EPRL map. This will be the map $\iota$ analogous to EPRL map but defined under a bit different conditions: 

\begin{quote}
{\bf Con $n$:} Sequences $(k_1,\ldots, k_n)$ and $(j^\pm_1,\ldots, j^\pm_n)$, where $k_i,j_i^\pm \in \NN$, are such that
\begin{itemize}
 \item $(k_1,\ldots, k_n)$ is admissible,
\item $k_i\not=0$ for all $i>1$,
\item $j^+_1+j^-_1=k_1$,
\item $j^\pm_i=\frac{1\pm\gamma}{2} k_i$ for $i\not=1$ and
\begin{equation}
 \frac{1+\gamma}{2}k_1-\frac{1}{2}\leq j^+_1\leq \frac{1+\gamma}{2}k_1+\frac{1}{2} 
\end{equation}
\item $j^\pm_1+ \ldots + j^\pm_n \in \N$
\item (ordering) $\exists i\geq 1\colon \begin{array}{ll}
                                         j_l^+\in \mathbb{N} & l\leq i\\
					 j_l^+\in \mathbb{N}+\frac{1}{2} & l>i
                                        \end{array}$.
\end{itemize}
\end{quote}
Few remarks are worth to mention:
\begin{itemize}
\item We would like to emphasize that although EPRL map usually do not satisfy those conditions, it can be easily replaced by an equivalent map that satisfies {\bf Con n}.

First of all we can assume that in the EPRL map $k_i\not=0$ for $i\geq 1$. Secondly, we can permute $k_i$ in such a way that 
\begin{equation}
 \exists i\colon\ j^+_l\in \N,\ {\rm for}\ l\leq i,\ \ \ j^+_l\in \N+\frac{1}{2},\ {\rm for}\ l\leq i.\ \ 
\end{equation}
These are exactly conditions of {\bf Con n}.

\item From the definition above follows that $j^+_i+j^-_i=k_i$ for all $i=1,\ldots,n$.
\item It follows also that
\begin{equation*}
 \frac{1-\gamma}{2}k_1-\frac{1}{2}\leq j^-_1\leq \frac{1-\gamma}{2}k_1+\frac{1}{2}. 
\end{equation*}
\item From conditions {\bf Con $n$} follows that $(j^\pm_1, \ldots, j^\pm_n)$ satisfy admissibility conditions -- this will be proved in lemma \ref{inequality}.
\end{itemize}

\begin{lm}
\label{inequality}
Let $(k_1,\ldots, k_n)$ and $(j^\pm_1,\ldots, j^\pm_n)$ be elements of $\NN$, such that: $(k_1,\ldots, k_n)$ is admissible, $j^\pm_i=\frac{1\pm\gamma}{2} k_i$ for $i\not=1$,
$\frac{1+\gamma}{2}k_1+\frac{1}{2}\geq j^+_1\geq \frac{1+\gamma}{2}k_1-\frac{1}{2},
$ $j^+_1+j^-_1=k_1$, $j^\pm_1+\ldots + j^\pm_n\in \N$, then $(j^\pm_1,\ldots, j^\pm_n)$ satisfy admissibility conditions.
\end{lm}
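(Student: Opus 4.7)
The plan is to reduce the triangle inequalities for the sequence $(j_1^\pm, \ldots, j_n^\pm)$ to those for $(k_1,\ldots,k_n)$ at the cost of a slack of $\frac{1}{2}$, and then close the gap using an integrality argument based on the hypothesis $\sum_i j_i^\pm \in \N$. Note that the sum condition in admissibility is already assumed, so only the triangle inequalities $j_i^\pm \leq \sum_{i'\neq i} j_{i'}^\pm$ need to be verified.

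First I would split into two cases according to whether $i = 1$ or $i \neq 1$. For $i \neq 1$, use $j_{i'}^\pm = \frac{1\pm\gamma}{2} k_{i'}$ for all $i'\neq 1$ and the lower bound $j_1^\pm \geq \frac{1\pm\gamma}{2} k_1 - \frac{1}{2}$ to compute
\begin{equation*}
\sum_{i'\neq i} j_{i'}^\pm \; \geq \; \frac{1\pm\gamma}{2}\sum_{i'\neq i} k_{i'} - \frac{1}{2} \; \geq \; \frac{1\pm\gamma}{2} k_i - \frac{1}{2} \; = \; j_i^\pm - \frac{1}{2},
\end{equation*}
where the second inequality uses admissibility of $(k_1,\ldots,k_n)$. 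For $i=1$, use the upper bound $j_1^\pm \leq \frac{1\pm\gamma}{2} k_1 + \frac{1}{2}$ together with $\sum_{i'\neq 1} j_{i'}^\pm = \frac{1\pm\gamma}{2}\sum_{i'\neq 1} k_{i'} \geq \frac{1\pm\gamma}{2} k_1$ to obtain the same estimate $\sum_{i'\neq 1} j_{i'}^\pm \geq j_1^\pm - \frac{1}{2}$.

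The final step is the integrality trick. Rewriting the estimate as $2 j_i^\pm - \sum_{i'} j_{i'}^\pm \leq \frac{1}{2}$, observe that $2 j_i^\pm \in \N$ (since $j_i^\pm \in \NN$) and $\sum_{i'} j_{i'}^\pm \in \N$ by hypothesis, so the left-hand side is an integer. An integer bounded above by $\frac{1}{2}$ is necessarily $\leq 0$, giving $j_i^\pm \leq \sum_{i'\neq i} j_{i'}^\pm$. This establishes admissibility of $(j_1^\pm, \ldots, j_n^\pm)$.

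I expect no serious obstacle: the content of the lemma is essentially the observation that for $|\gamma|<1$ the rescaling by $\frac{1\pm\gamma}{2}$ can only violate a triangle inequality by at most the ambiguity $\frac{1}{2}$ introduced in $j_1^\pm$, and this half-integer slack is then eliminated by parity. The only point requiring care is to ensure both sign choices $\pm$ are handled uniformly, which is immediate since the hypotheses are symmetric in $\pm$ (the bound on $j_1^-$ noted in the remarks before the lemma follows from $j_1^++j_1^-=k_1$).
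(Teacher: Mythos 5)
Your proof is correct and follows essentially the same route as the paper's: both establish each triangle inequality for $(j_1^\pm,\ldots,j_n^\pm)$ up to a slack of $\tfrac{1}{2}$ coming from the bounds on $j_1^\pm$, and then remove the slack by observing that $2j_i^\pm-\sum_{i'}j_{i'}^\pm$ is an integer because $\sum_{i'}j_{i'}^\pm\in\N$. Your handling of the $j_1^-$ bound via $j_1^++j_1^-=k_1$ matches the remark preceding the lemma, so there is nothing to add.
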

\begin{proof}
From the definition of $j^\pm_i$ and from the fact that $(k_1,\ldots, k_n)$ are admissible, we know that \[j^\pm_1\leq \frac{1\pm\gamma}{2} k_1 +\frac{1}{2}\leq \frac{1\pm\gamma}{2} (k_2+\ldots+k_n) +\frac{1}{2}=j^{\pm}_2+\ldots+j^{\pm}_n + \frac{1}{2}.\] 
We have  $j^\pm_1+\ldots + j^\pm_n\in \N$, so $j^\pm_1<j^{\pm}_2+\ldots+j^{\pm}_n + \frac{1}{2}$. As a result $j^\pm_1\leq j^{\pm}_2+\ldots+j^{\pm}_n$. This is one of the desired inequalities.

Similarly for $i\neq 1$ we have:
\[
 j^{\pm}_i = \frac{1\pm \gamma}{2} k_i \leq \frac{1\pm \gamma}{2} k_1 + \sum_{l>1, l\neq i}\frac{1\pm \gamma}{2} k_l \leq \sum_{l\neq i} j^{\pm}_l + \frac{1}{2}
\]
As in previous case $j^\pm_1+\ldots + j^\pm_n\in \N$ implies $j^{\pm}_i<\sum_{l\neq i} j^{\pm}_l + \frac{1}{2}$ and finally $j^{\pm}_i\leq \sum_{l\neq i} j^{\pm}_l$. This finishes proof of this lemma.

\end{proof}

We will base the proof of theorem \ref{inj-EPRL}, in the case $\omicron$ is tree-irreducible, on the following inductive hypothesis ($n\in \N_+$, $n\geq 3$):
\begin{quote}
{\bf Hyp $n$:} Suppose that $(k_1,\ldots, k_n)$, $(j^\pm_1,\ldots, j^\pm_n)$ satisfy condition {\bf Con $n$} and that $\omicron\in 
{\rm Inv}\left(\Hil_{k_1}\otimes\cdots\otimes\Hil_{k_n}\right)$ is tree-irreducible. Then, there 
exists 
$$\phi\in {\rm Inv}\left(\Hil_{j^+_1}\otimes\cdots\otimes\Hil_{j^+_n}\right)\otimes {\rm Inv}\left(\Hil_{j^-_1}\otimes\cdots\otimes\Hil_{j^-_n}\right)$$ such that
$\langle \phi, \iota_{k_1\ldots k_n}(\omicron)\rangle\not=0$.
\end{quote}
This in fact proves injectivity, when restricting to tree-irreducible intertwiners. Note that $\langle \phi, \iota_{k_1\ldots k_n}(\omicron)\rangle = \langle \phi, \iota'_{k_1\ldots k_n}(\omicron)\rangle$, where $\iota'_{k_1\ldots k_n}$ is defined without projections onto invariants of $Spin(4)$, i.e.
\begin{align*}
&\iota'_{k_1\ldots k_n}\colon  {\rm Inv}\left(\Hil_{k_1}\otimes\cdots\otimes\Hil_{k_n}\right)\rightarrow \left(\Hil_{j^+_1}\otimes\cdots\otimes\Hil_{j^+_n}\right)\otimes \left(\Hil_{j^-_1}\otimes\cdots\otimes\Hil_{j^-_n}\right)\\
&\iota'_{k_1\ldots k_n}(\omicron)^{A_1^+\ldots A_n^+ A_1^-\ldots A_n^-}=
\omicron^{A_1\ldots A_n} C_{A_1}^{A_1^+ A_1^-}\cdots C_{A_n}^{A_n^+ A_n^-}
\end{align*}
As a result,  it is enough to find $\phi$, such that $\langle \phi, \iota'_{k_1\ldots k_n}(\omicron)\rangle\not=0$.


\subsection{Proof of tree-irreducible case of inductive hypothesis} \label{simpl_ind_step}
We present in this section the proof in this tree-irreducible case. To make the presentation more transparent, we move some parts to sections \ref{choice_j_alpha} and \ref{chi_not_0}.

 Assume that $n > 3$ and we have proved {\bf Hyp $n-1$}. Let $(k_1,\ldots, k_n)$ and $(j^\pm_1,\ldots, j^\pm_n)$ satisfy {\bf Con $n$} and $\omicron \in {\rm Inv}\left(\Hil_{k_1}\otimes\cdots\otimes\Hil_{k_n}\right)$ is tree-irreducible. We may write the invariant in the following way:
\begin{equation}
 \label{inv_dec}
 \omicron^{A_1 A_2 \ldots A_n}=\sum_{k_{\alpha} \in J} C^{A_1 A_2}_{A_\alpha} (\omicron^{k_{\alpha}})^{A_{\alpha} A_3 \ldots A_n},
\end{equation}
where $J:=\{k_{\alpha}\in \NN: \omicron^{k_{\alpha}}\not\equiv 0 \}$.

\begin{enumerate}
 \item Define $k'_{\alpha}$ to be the minimal element in $J$. Note that, if $n>2$, then $k'_{\alpha}\not=0$, because $\omicron$ is tree-irreducible.
 \item Find $j^{+}_{\alpha}$ (determined by $k'_{\alpha}$) using the procedure defined in section \ref{choice_j_alpha}. This procedure uses the fact that $\omicron$ is tree-irreducible.

 As a result we obtain $j^{+}_{\alpha}\in \NN$, such that:
\[
 \frac{1+\gamma}{2} k'_\alpha - \frac{1}{2} \leq j^{+}_{\alpha}\leq \frac{1+\gamma}{2} k'_\alpha + \frac{1}{2},
\]
$(j^+_{\alpha},j_1^+,j_2^+)$ and  $(j^-_{\alpha},j_1^-,j_2^-)$ are admissible ($j^-_{\alpha}:=k'_{\alpha}-j^+_{\alpha}$).

Note that $j^{\pm}_{\alpha}+j^{\pm}_{3}+\ldots+ j^{\pm}_n \in \N$. It follows from the fact that $j^{\pm}_1,\ldots, j^{\pm}_n \in \frac{1}{2}\N$ (i.e. from {\bf Con $n$}) and the fact that $j^{\pm}_{\alpha}+j_1^{\pm}+j_2^{\pm}\in \N$. 

Let us also notice, that $j^+_\alpha\in \N+\frac{1}{2}$ only if exactly one of $j^+_1$ or $j_2^+$ belongs to $\N+\frac{1}{2}$. Then from the ordering condition, only $j^+_2\in \N+\frac{1}{2}$ and so $j^+_\alpha,j^+_3,\ldots, j^+_n\in \N+\frac{1}{2}$. Ordering condition is thus satisfied also for $(k_\alpha,k_3,\ldots, k_n)$.
 \item Considerations above show that $(k_{\alpha},k_3,\ldots,k_n)$ and $(j_{\alpha}^{\pm},j_3^{\pm},\ldots,j_n^{\pm})$ satisfy {\bf Con $n-1$}. Moreover $\omicron^{k'_{\alpha}}$ is tree-irreducible, because $\omicron$ is.

 From {\bf Hyp $n-1$} follows that for $\omicron^{k'_{\alpha}}$ there exists 
\[\phi^{k'_{\alpha}} \in {\rm Inv}\left(\Hil_{j^+_{\alpha}}\otimes\cdots\otimes\Hil_{j^+_n}\right)\otimes {\rm Inv}\left(\Hil_{j^-_{\alpha}}\otimes\cdots\otimes\Hil_{j^-_n}\right)\] such that
$\langle \phi^{k'_{\alpha}}, \iota'_{k_\alpha k_3\ldots k_n}(\omicron^{k'_{\alpha}})\rangle\not=0$.
\item Having defined $\phi^{k'_{\alpha}}$, we construct $\phi$:
\[
 \phi^{A_1^+\ldots A_n^+, A_1^{-}\ldots A_n^{-}}:=C^{A_1^+ A_2^+}_{A_{\alpha}^+} C^{A_1^- A_2^-}_{A_{\alpha}^-} (\phi^{k'_{\alpha}})^{A_{\alpha} A_3^+\ldots A_n^+, A_{\alpha}^- A_3^{-}\ldots A_n^{-}}
\]
\item The $\phi$ constructed in previous point is the $\phi$ we are looking for, i.e.
$\langle \phi, \iota'_{k_1\ldots k_n}(\omicron)\rangle\not=0$. In this point we show it.

First, using equation \eqref{inv_dec} we write $\langle \phi, \iota'_{k_1\ldots k_n}(\omicron)\rangle$ as a sum:
\begin{equation}\label{sum_1}
 \langle \phi, \iota'_{k_1\ldots k_n}(\omicron)\rangle = \sum_{k_{\alpha}} \langle \phi, \iota'_{k_1\ldots k_n}(C_{k_\alpha}^{k_1 k_2} \circ \omicron^{k_{\alpha}})\rangle,
\end{equation}
where ${(C_{k_\alpha}^{k_1 k_2} \circ \omicron^{k_{\alpha}})}^{A_1 A_2 \ldots A_n}:= C^{A_1 A_2}_{A_{\alpha}} (\omicron^{k_{\alpha}})^{A_{\alpha} A_3 \ldots A_n}$.

From the definition of $k'_{\alpha}$ in point 1 follows that the sum is actually over $k_{\alpha}\geq k'_{\alpha}$:
\begin{equation}
\label{sum}
  \langle \phi, \iota'_{k_1\ldots k_n}(\omicron)\rangle = \sum_{k_{\alpha}\geq k'_{\alpha}} \langle \phi, \iota'_{k_1\ldots k_n}(C_{k_\alpha}^{k_1 k_2} \circ \omicron^{k_{\alpha}})\rangle .
\end{equation}

Let us compute each term $\langle \phi, \iota'_{k_1\ldots k_n}(C_{k_\alpha}^{k_1 k_2}\circ \omicron^{k_{\alpha}})\rangle$ (such term is schematically illustrated on picture \ref{fig:simplified_1}):
\begin{figure}[hbt!]
  \centering
\subfloat[$\langle \phi, {\color{czerwony}\iota'_{k_1\ldots k_n}}(C_{k_\alpha}^{k_1 k_2}\circ {\color{zielony} \omicron^{k_{\alpha}}})\rangle=\langle C^{j_1^+ j_2^+}_{j_{\alpha}^+}\circ C^{j_1^- j_2^-}_{j_{\alpha}^-}{\color{niebieski} \phi^{k'_{\alpha}}},{\color{czerwony} \iota'_{k_1\ldots k_n}}(C_{k_\alpha}^{k_1 k_2}\circ{\color{zielony} \omicron^{k_{\alpha}}})\rangle$
]{\label{fig:simplified_1}\includegraphics[width=0.5\textwidth, trim=50px 100px 0px 60px]{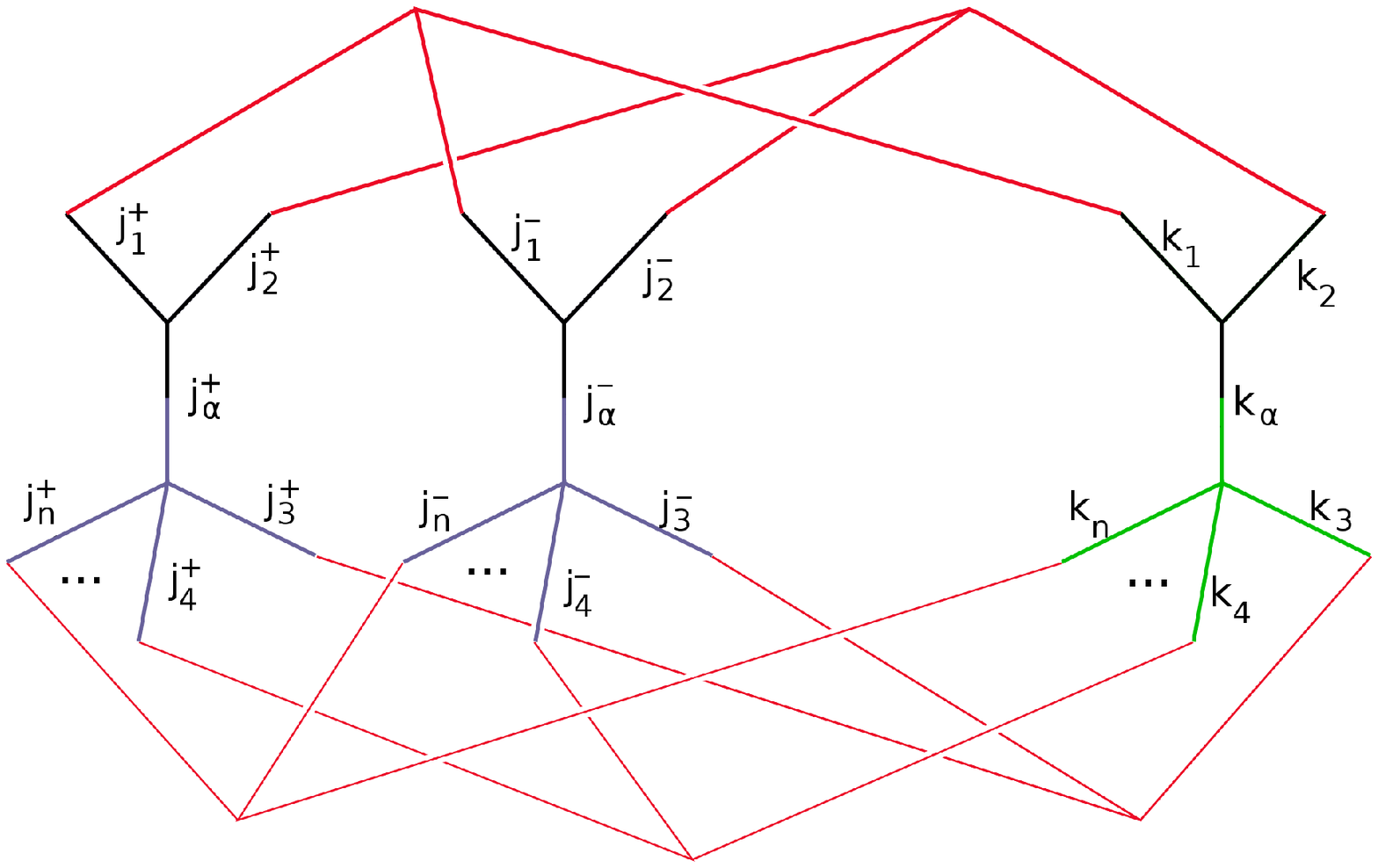} }
\subfloat[The only non-trivial term in the sum \eqref{sum_1} is $\chi \langle {\color{niebieski} \phi^{k'_{\alpha}}},{\color{czerwony} \iota'_{k_\alpha k_3\ldots k_n}}({\color{zielony}\omicron^{k'_{\alpha}}})\rangle$.]{\label{fig:simplified_2}
\includegraphics[width=0.50\textwidth, trim=0px 50px 0px 60px]{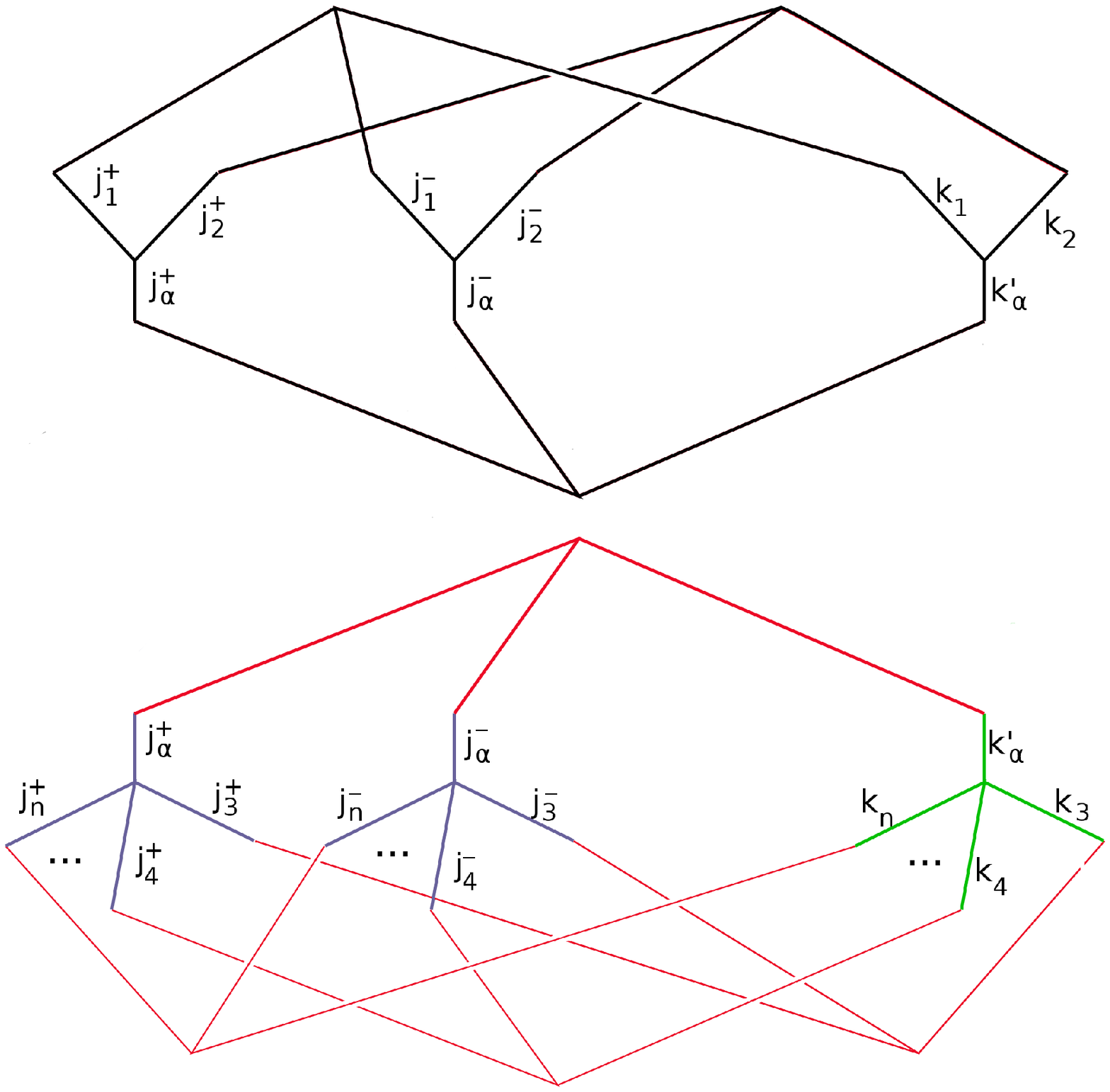}}
  \caption{$\langle \phi, \iota'_{k_1\ldots k_n}(\omicron)\rangle$ equals the sum of terms depicted on figure \ref{fig:simplified_1}. The only non-trivial term is depicted on figure \ref{fig:simplified_2}. Its non-triviality follows from {\bf Hyp $n-1$}} and lemma \ref{nonzero}.
  \label{fig:simplified}
\end{figure}
\begin{align*}
\langle \phi, \iota'_{k_1\ldots k_n}(& C_{k_\alpha}^{k_1 k_2}\circ \omicron^{k_{\alpha}})\rangle=\\=
(\phi^{k'_\alpha})^\dagger_{A^+_\alpha A_3^+\ldots A_n^+ A^-_\alpha A^-_3 \ldots A_n^-} C_{A^+_1 A^+_2}^{A^+_\alpha} & C_{A^-_1 A^-_2}^{A^-_\alpha} C_{A_1}^{A^+_1 A^-_1}\cdots C_{A_n}^{A^+_n A^-_n} C_{A_\alpha}^{A_1 A_2} (\omicron^{k_\alpha})^{A_\alpha \ldots A_n} \ \ =\\
= C_{A^+_1 A^+_2}^{A^+_\alpha} C_{A^-_1 A^-_2}^{A^-_\alpha} C_{A_1}^{A^+_1 A^-_1} C_{A_2}^{A^+_2 A^-_2} C_{A_\alpha}^{A_1 A_2} & (\phi^{k'_\alpha})^\dagger_{A^+_\alpha A_3^+\ldots A_n^+ A^-_\alpha A^-_3 \ldots A_n^-} C_{A_3}^{A^+_3 A^-_3}\cdots C_{A_n}^{A^+_n A^-_n} (\omicron^{k_\alpha})^{A_\alpha A_3 \ldots A_n}
\end{align*}
We have 
\begin{equation}\label{clebsch}
 C_{A^+_1 A^+_2}^{A^+_\alpha} C_{A^-_1 A^-_2}^{A^-_\alpha} C_{A_1}^{A^+_1 A^-_1} C_{A_2}^{A^+_2 A^-_2} C_{A_\alpha}^{A_1 A_2}=
\left\{\begin{array}{ll}
0 & k_\alpha>j^+_\alpha+j^-_\alpha\\
\chi C_{A_\alpha}^{A^+_\alpha A^-_\alpha}, & k_\beta=j^+_\alpha+j^-_\alpha.
\end{array}\right.
\end{equation}
The first equality is obvious because there exists no intertwiner if $k_\alpha>j^+_\alpha+j^-_\alpha$ (let us remind that $j^+_\alpha+j^-_\alpha=k'_{\alpha}$). The second equality is also obvious because for $k_\alpha=j^+_\alpha+j^-_\alpha$, the space ${\rm Inv}\left( \Hil_{k_{\alpha}}\otimes \Hil_{k_2}\otimes \Hil_{k_3} \right)$ is one-dimensional. The nontrivial statement is that $\chi\not= 0$. The non-triviality of $\chi$ is assured by lemma \ref{nonzero} which was proved in our previous article \cite{cEPRL}.

Summarizing, for some $\chi\in\mathbb{C}\backslash\{0\}$, we have:
\[
 \langle \phi, \iota'_{k_1\ldots k_n} (C^{k_1 k_2}_{k_\alpha}\circ \omicron^{k_{\alpha}}) \rangle=
\left\{\begin{array}{ll}
        0, & k_\alpha>k'_\alpha\\
\chi \langle \phi^{k'_{\alpha}}, \iota'_{k_\alpha k_3\ldots k_n} (\omicron^{k'_{\alpha}}) \rangle, & k_\alpha=k'_\alpha\\ 
*, & k_\alpha<k'_\alpha
       \end{array}\right.
\]
As a result all but one term in the sum \eqref{sum} are equal zero and:
\[
 \langle \phi, \iota'_{k_1\ldots k_n}(\omicron)\rangle=\chi \langle \phi^{k'_{\alpha}}, \iota'_{k_\alpha k_3\ldots k_n} (\omicron^{k'_{\alpha}})\rangle \not=0
\]
\end{enumerate}
We obtained that for $n> 3$, {\bf Hyp $n$} follows from {\bf Hyp $n-1$}. In order to finish the inductive proof, it remains to check that {\bf Hyp $3$} is true. In this case sequences $(k_1,k_2,k_3)$ and $(j^\pm_1,j^\pm_2,j^\pm_3)$ are admissible and invariant spaces are one dimensional. {\bf Hyp 3} follows now from lemma \ref{oldlemma}.

This proof of first inductive step is valid in general case, because for $n=3$ all invariants are tree-irreducible.

\subsection{The choice of $j_{\alpha}^+$}\label{choice_j_alpha}
In this section we discuss the procedure of choosing $j_{\alpha}^+$. It is depicted on the diagram below and it is justified by three lemmas \ref{jeden}, \ref{dwa}, \ref{trzy}. Note that $k_1\not=0$ and $k_\alpha\not=0$ (on every step of inductive procedure), because $\omicron$ is tree-irreducible. It is reflected in these lemmas by the condition, that $j\not=0$ and $l\not=0$.
\begin{center}
\vspace{0.5cm}
\small
\psframebox[linearc=0.5,cornersize=absolute,framesep=10pt]{%
  \psset{shadowcolor=black!70,blur=true}%
  \begin{psmatrix}[rowsep=0.4,colsep=0.5]
    \psframebox[fillstyle=solid,fillcolor=Pink,shadow=true]{$\frac{1+\gamma}{2} k_1 -\frac{1}{2}<j^+_1<\frac{1+\gamma}{2} k_1 +\frac{1}{2}$ ?} &
       \psframebox[shadow=true]{\begin{tabular}{c} Take one of the $j^{+}_{\alpha}$ satisfying\\ $\frac{1+\gamma}{2} k'_{\alpha} -\frac{1}{2}\leq j_{\alpha}^+\leq \frac{1+\gamma}{2} k'_{\alpha} +\frac{1}{2}$\\ and such that $j^+_\alpha+j_1^++j_2^+\in \N$  \end{tabular}
} \\
    \psframebox[fillstyle=solid,fillcolor=Pink,shadow=true]{\begin{tabular}{c} Does there exist $j^{+}_{\alpha}\in \NN$ satisfying\\ $\frac{1+\gamma}{2} k'_{\alpha} -\frac{1}{2}< j^+_{\alpha}< \frac{1+\gamma}{2} k'_{\alpha} +\frac{1}{2}$\\ and such that $j^+_\alpha+j_1^++j_2^+\in \N$ ? \end{tabular}} &
       \psframebox[shadow=true]{ Take this $j^{+}_{\alpha}$.
} \\
    ~\\
    \psframebox[shadow=true,fillstyle=solid,fillcolor=Pink]{$k_2+k'_{\alpha}>k_1$ and $k_2+k_1>k'_{\alpha}$ ?} & \psframebox[shadow=true]{\begin{tabular}{c} In this case $j_1^+=\frac{1+\gamma}{2} k_1 +\frac{1}{2}$ \\or $j_1^+=\frac{1+\gamma}{2} k_1 -\frac{1}{2}$.\\ Take $j_{\alpha}^+=\frac{1+\gamma}{2} k'_{\alpha} -\frac{1}{2}$ \\or $j_{\alpha}^+=\frac{1+\gamma}{2} k'_{\alpha} +\frac{1}{2}$ respectively.\end{tabular}} \\
    \psframebox[shadow=true]{\begin{tabular}{c} In this case $j_1^+=\frac{1+\gamma}{2} k_1 +\frac{1}{2}$ \\ or $j_1^+=\frac{1+\gamma}{2} k_1 -\frac{1}{2}$.\\ Take $j_{\alpha}^+=\frac{1+\gamma}{2} k'_{\alpha} +\frac{1}{2}$ \\ or $j_{\alpha}^+=\frac{1+\gamma}{2} k'_{\alpha} -\frac{1}{2}$ respectively.\end{tabular}}
    \ncline{->}{1,1}{2,1}<{\textcolor{red}{No}}
    \ncline{->}{2,1}{4,1}<{\textcolor{red}{No}}
    \ncline{->}{2,1}{2,2}^{\textcolor{red}{\hspace{50px} Yes}}
    \ncline{->}{4,1}{4,2}^{\textcolor{red}{Yes}}
    \ncline{->}{4,1}{5,1}<{\textcolor{red}{No}}
    \ncline{->}{1,1}{1,2}^{\textcolor{red}{Yes}}
    \ncline{->}{1,2}{1,3}
    \ncbar[angleA=-90,armB=0,nodesepB=0.25]{->}{3,3}{4,1}
    \end{psmatrix}%
}
\end{center}

We define $j^-_{\alpha}:= k'_{\alpha}-j^+_{\alpha}$. In each case in the diagram above lemmas \ref{jeden}, \ref{dwa}, \ref{trzy} show that $(j^+_{\alpha},j_1^+,j_2^+)$ and  $(j^-_{\alpha},j_1^-,j_2^-)$ are admissible. First lemma is used in first and second case depicted in the diagram (in those cases we use lemma \ref{jeden} with $j= k_1, k= k_2 ,l= k_{\alpha}, k^{\pm}=j^{\pm}_1, j^{\pm}=j^{\pm}_2, l^{\pm}=j^{\pm}_{\alpha}$ and $j=k_{\alpha}, k= k_2, l= k_1, k^{\pm}=j^{\pm}_{\alpha}, j^{\pm}=j^{\pm}_2, l^{\pm}=j^{\pm}_1$ respectively). Second and third lemma is used in the last step. We prove now those lemmas.
\begin{lm}
 Let $(j,k,l)$ be admissible and $j\not=0,l\not= 0$. If $j^{+},k^{+},l^{+}$ are elements of $\NN$ satisfying:
$\frac{1+\gamma}{2}j-\frac{1}{2}<j^+< \frac{1+\gamma}{2}j+\frac{1}{2}$, $k^+=\frac{1+\gamma}{2}k$, $\frac{1+\gamma}{2}l-\frac{1}{2}\leq l^+\leq \frac{1+\gamma}{2}l+\frac{1}{2}$ and $j^{+}+k^{+}+l^{+}\in \N$, then $(j^{+},k^{+},l^{+})$ and $(j-j^+,k-k^+,l-l^+)$ are admissible.
\label{jeden}
\end{lm}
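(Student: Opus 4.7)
My plan is to verify, for each of the triples $(j^+,k^+,l^+)$ and $(j-j^+,k-k^+,l-l^+)$, the two admissibility requirements: every entry lies in $\NN$ with the triple summing to an element of $\N$, and the three triangle inequalities hold. The main tool is that since $j^+ + k^+ + l^+ \in \N$ and each of $2j^+, 2k^+, 2l^+ \in \N$, every signed combination $\pm j^+ \pm k^+ \pm l^+$ is an integer; consequently every triangle-inequality defect is an integer, and it suffices to prove it is strictly greater than $-1$ to conclude it is $\geq 0$.

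For the triple $(j^+,k^+,l^+)$, I would combine the strict bracket $j^+ < \frac{1+\gamma}{2}j + \frac{1}{2}$, the non-strict bracket $l^+ \geq \frac{1+\gamma}{2}l - \frac{1}{2}$, the equality $k^+ = \frac{1+\gamma}{2}k$, and the triangle inequalities of $(j,k,l)$, exploiting that $\frac{1+\gamma}{2} > 0$ because $|\gamma|<1$. For instance,
\[
k^+ + l^+ - j^+ \;>\; \tfrac{1+\gamma}{2}(k+l-j) - 1 \;\geq\; -1,
\]
and symmetric estimates, picking the correct strict or non-strict side of each bracket, give $l^+ + j^+ - k^+ > -1$ and $j^+ + k^+ - l^+ > -1$. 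Integrality upgrades these to the three desired triangle inequalities.

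For the second triple I observe that subtracting the given brackets from $j,k,l$ yields the identical hypotheses with $\gamma$ replaced by $-\gamma$, namely $\frac{1-\gamma}{2}j - \frac{1}{2} < j - j^+ < \frac{1-\gamma}{2}j + \frac{1}{2}$, $k - k^+ = \frac{1-\gamma}{2}k$, and the corresponding bracket for $l-l^+$. Since $|\gamma|<1$ also gives $\frac{1-\gamma}{2}>0$, the same argument runs verbatim. Nonnegativity of the entries requires a short separate check: if $l^+ > l$ with $l,l^+\in\NN$, then $l^+ \geq l + \frac{1}{2}$, and the upper bracket on $l^+$ forces $\gamma \geq 1$, a contradiction; the analogous argument for $j^+$ uses strictness of its bracket, and $k^+ \leq k$ is automatic from $\frac{1+\gamma}{2}\leq 1$. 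The sum condition $(j-j^+)+(k-k^+)+(l-l^+) = (j+k+l) - (j^++k^++l^+) \in \N$ is immediate.

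I expect no serious obstacle; the proof is a disciplined chase of inequalities. The one point that needs care is the integrality step that upgrades strict real bounds $> -1$ into genuine triangle inequalities, which is exactly why the hypothesis $j^++k^++l^+\in\N$ is essential. The assumptions $j\neq 0$ and $l\neq 0$ enter precisely to rule out the degenerate configurations (e.g.\ $l=0, l^+=\tfrac12$) that would otherwise allow $j^+>j$ or $l^+>l$ and spoil nonnegativity of the second triple.
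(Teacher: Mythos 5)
Your proposal is correct and follows essentially the same route as the paper's proof: the same integrality trick (each triangle defect is an integer, so a strict bound $>-1$ upgrades to $\geq 0$), the same combination of the brackets with the admissibility of $(j,k,l)$ and positivity of $\frac{1\pm\gamma}{2}$, and the same observation that the triple $(j-j^+,k-k^+,l-l^+)$ satisfies the mirrored hypotheses with $\gamma\to-\gamma$, so the argument repeats verbatim. The nonnegativity checks using $j\neq 0$, $l\neq 0$ also match the paper's.
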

\begin{proof}
We denote $j^{-}:=j-j^+,k^{-}:=k-k^+,l^{-}:=l-l^+$.
\begin{enumerate}
 \item Notice that $j^{-},k^{-},l^{-}$ satisfy $\frac{1-\gamma}{2}j-\frac{1}{2}<j^-< \frac{1-\gamma}{2}j+\frac{1}{2}$, $k^-=\frac{1-\gamma}{2}k$, $\frac{1-\gamma}{2}l-\frac{1}{2}\leq l^-\leq \frac{1-\gamma}{2}l+\frac{1}{2}$ and $j^{-}+k^{-}+l^{-}\in \N$. It is a direct check. Inequalities $\frac{1+\gamma}{2}j-\frac{1}{2}<j^+< \frac{1+\gamma}{2}j+\frac{1}{2}$ imply, that
\[
 \frac{1+\gamma}{2}j-\frac{1}{2}<j-j^-< \frac{1+\gamma}{2}j+\frac{1}{2}.
\]
As a result
\[
 \frac{-1+\gamma}{2}j-\frac{1}{2}<-j^-< \frac{-1+\gamma}{2}j+\frac{1}{2}
\]
and
\[
 \frac{1-\gamma}{2}j-\frac{1}{2}<j^-< \frac{1-\gamma}{2}j+\frac{1}{2}
\]

The same with $\frac{1-\gamma}{2}l-\frac{1}{2}\leq l^-\leq \frac{1-\gamma}{2}l+\frac{1}{2}$ and $k^-=\frac{1-\gamma}{2}k$ is obvious. Finally $j^{-}+k^{-}+l^{-}\in \N$ follows from the fact that $j^{+}+k^{+}+l^{+}\in \N$ and $j+k+l\in \N$.

\item Note  also that $j^{-}\geq 0, k^{-}\geq 0, l^{-}\geq 0$: $\frac{1-\gamma}{2}j-\frac{1}{2}<j^-$, so $-\frac{1}{2}<j^-$; similarly $\frac{1-\gamma}{2}l-\frac{1}{2}\leq l^-$ implies $-\frac{1}{2}<l^-$, because $l\not=0$ and $|\gamma|<1$; $k^{\pm}\geq 0$ is straightforward.

\item We check now triangle inequalities.

\[
 j^++k^+>\frac{1+\gamma}{2} j -\frac{1}{2} + \frac{1+\gamma}{2} k = \frac{1+\gamma}{2}(j+k) -\frac{1}{2}\geq \frac{1+\gamma}{2} l -\frac{1}{2} \geq l^+ -1.
\]
It follows that
\[
 j^++k^+ - l^+> -1.
\]
However $j^++k^+ + l^+ \in \N$, so $j^++k^+ - l^+\in \mathbb{Z}$. As a result
\[
 j^++k^+ - l^+ \geq 0.
\]

Similarly,
\[
 k^++l^+ \geq \frac{1+\gamma}{2} k +\frac{1+\gamma}{2} l -\frac{1}{2} = \frac{1+\gamma}{2}(k+l) -\frac{1}{2}\geq \frac{1+\gamma}{2} j -\frac{1}{2} > j^+ -1.
\]
We obtain $k^++l^+ - j^+\geq 0$.

We have also
\[
 l^++j^+ > \frac{1+\gamma}{2} l +\frac{1+\gamma}{2} j -1 = \frac{1+\gamma}{2}(j+l) -1 \geq \frac{1+\gamma}{2} k -1 = k^+ -1.
\]
Finally $l^++j^+ - k^+\geq 0$. This proves that $(j^+,k^+,l^+)$ is admissible. The proof for  $(j^-,k^-,l^-)$ is the same.
\end{enumerate}
\end{proof}

\begin{lm}
  Let $(j,k,l)$ be admissible and $j\not=0,l\not= 0$. If $j^{+},k^{+},l^{+}$ are elements of $\NN$ satisfying:
$j^+ = \frac{1+\gamma}{2}j\pm\frac{1}{2}$, $k^+=\frac{1+\gamma}{2}k$, $l^+= \frac{1+\gamma}{2}l\mp \frac{1}{2}$, $j^{+}+k^{+}+l^{+}\in \N$, $k+l>j$ and $j+k>l$, then $(j^{+},k^{+},l^{+})$ and $(j-j^+,k-k^+,l-l^+)$ are admissible.
\label{dwa}
\end{lm}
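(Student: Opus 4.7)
The plan is to follow the same three-step template used in Lemma \ref{jeden}: first record the shapes of $j^-,k^-,l^-$, then check non-negativity, and finally verify the triangle inequalities. The new feature compared with Lemma \ref{jeden} is that the half-integer shifts on $j^+$ and $l^+$ are now saturated (equalities, not strict inequalities), so the ``soft'' triangle bounds of that proof no longer leave room to spare; this loss is compensated exactly by the two extra strict hypotheses $k+l>j$ and $j+k>l$. By the symmetry of the statement under exchanging the two sign choices together with swapping the roles of $j$ and $l$, I would focus on the case $j^+=\frac{1+\gamma}{2}j+\frac{1}{2}$, $l^+=\frac{1+\gamma}{2}l-\frac{1}{2}$; the other sign pattern is handled identically with the roles of $k+l>j$ and $j+k>l$ interchanged.

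With $j^-:=j-j^+$, $k^-:=k-k^+$, $l^-:=l-l^+$, a direct computation gives $j^-=\frac{1-\gamma}{2}j-\frac{1}{2}$, $k^-=\frac{1-\gamma}{2}k$, $l^-=\frac{1-\gamma}{2}l+\frac{1}{2}$, and $j^-+k^-+l^-\in\N$ follows from $j+k+l\in\N$ together with $j^++k^++l^+\in\N$. For non-negativity, $j^+\geq\frac{1}{2}$, $k^+\geq 0$, and $l^+\in\NN$ by hypothesis, while on the minus side $l^-\geq\frac{1}{2}$ and $k^-\geq 0$ are immediate; the only delicate one, $j^-=\frac{1-\gamma}{2}j-\frac{1}{2}>-\frac{1}{2}$, follows from $j>0$ and $|\gamma|<1$, and since $j^-\in\frac{1}{2}\Z$ this forces $j^-\geq 0$, exactly as in the proof of Lemma \ref{jeden}.

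The core of the proof is then the triangle inequalities. The two ``easy'' bounds on the plus side expand to $j^++k^+-l^+=\frac{1+\gamma}{2}(j+k-l)+1\geq 1$ and $j^++l^+-k^+=\frac{1+\gamma}{2}(j+l-k)\geq 0$, both straight from admissibility of $(j,k,l)$. The delicate one is $k^++l^+-j^+=\frac{1+\gamma}{2}(k+l-j)-1$: here I would use the strict hypothesis $k+l>j$ together with the observation that $k+l-j=(j+k+l)-2j\in\Z$, which upgrades strict positivity to $k+l-j\geq 1$; combined with $1+\gamma>0$ this gives $k^++l^+-j^+>-1$, and since $k^++l^+-j^+\in\Z$ (from $j^++k^++l^+\in\N$) one concludes $\geq 0$. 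The inequalities for $(j^-,k^-,l^-)$ are obtained by the same device, this time invoking $j+k>l$ to force $j+k-l\geq 1$ and hence $j^-+k^--l^-=\frac{1-\gamma}{2}(j+k-l)-1>-1$, with the remaining two bounds again automatic. The main obstacle is really only the bookkeeping: one has to keep straight which of the two strict triangle hypotheses is consumed in each of the borderline inequalities, and this is exactly why both $k+l>j$ and $j+k>l$ must appear in the statement.
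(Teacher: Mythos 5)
Your proof is correct and follows essentially the same route as the paper's: expand the three triangle sums directly, observe that the two borderline ones reduce to $\frac{1\pm\gamma}{2}(\cdot)-1$, and rescue them by combining the strict hypotheses $k+l>j$, $j+k>l$ with the integrality argument ($>-1$ together with membership in $\mathbb{Z}$ forces $\geq 0$). Your explicit bookkeeping of which strict hypothesis is consumed on the plus side versus the minus side is a slightly more careful rendering of the paper's terse ``Proof for $j^-,k^-,l^-$ is the same.''
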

\begin{proof}
 As previously, we denote $j^{-}:=j-j^+,k^{-}:=k-k^+,l^{-}:=l-l^+$ (it is easy to check, that they are nonnegative).

Let us check triangle inequalities:
\[
 j^++k^+=\frac{1+\gamma}{2}j\pm\frac{1}{2}+ \frac{1+\gamma}{2}k= \frac{1+\gamma}{2}(j+k)\pm\frac{1}{2}> \frac{1+\gamma}{2}l\pm\frac{1}{2}=l^+\mp 1
\]

By arguments used in previous lemma, we obtain $j^++k^+-l^+\geq 0$.

Let us check another inequality:
\[
 k^++l^+=\frac{1+\gamma}{2}k+\frac{1+\gamma}{2}l\mp\frac{1}{2}= \frac{1+\gamma}{2}(k+l)\mp\frac{1}{2}> \frac{1+\gamma}{2}j\mp\frac{1}{2}=j^+\pm 1.
\]
As a result $k^++l^+-j^+\geq 0$.

Finally
\[
 j^++l^+=\frac{1+\gamma}{2}j\pm\frac{1}{2}+\frac{1+\gamma}{2}l\mp\frac{1}{2}= \frac{1+\gamma}{2}(j+l) \geq \frac{1+\gamma}{2}k=k^+
\]
This finishes the prove of triangle inequalities. Proof for $j^-,k^-,l^-$ is the same.
\end{proof}
\begin{lm}
   Let $(j,k,l)$ be admissible and $j\not=0,l\not= 0$. If $j^{+},k^{+},l^{+}$ are elements of $\NN$ satisfying:
$j^+ = \frac{1+\gamma}{2}j\pm\frac{1}{2}$, $k^+=\frac{1+\gamma}{2}k$, $l^+= \frac{1+\gamma}{2}l\pm \frac{1}{2}$, $j^{+}+k^{+}+l^{+}\in \N$, $k+l=j$ or $j+k=l$, then $(j^{+},k^{+},l^{+})$ and $(j-j^+,k-k^+,l-l^+)$ are admissible.
\label{trzy}
\end{lm}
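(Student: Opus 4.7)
The plan is to mirror the structure of lemmas \ref{jeden} and \ref{dwa}: set $j^-:=j-j^+$, $k^-:=k-k^+$, $l^-:=l-l^+$, verify first that these lie in $\NN$, and then check the three triangle inequalities both for $(j^+,k^+,l^+)$ and for $(j^-,k^-,l^-)$. A direct substitution gives $j^- = \frac{1-\gamma}{2}j \mp \frac{1}{2}$, $k^- = \frac{1-\gamma}{2}k$, $l^- = \frac{1-\gamma}{2}l \mp \frac{1}{2}$ (with the opposite $\pm$ sign of the $+$ version) and $j^- + k^- + l^- \in \N$. This symmetry $\gamma \leftrightarrow -\gamma$ (with the $\pm$ signs flipped accordingly) means the $-$ sector is handled by the same calculation. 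Nonnegativity of $j^-, l^-$ is then obtained exactly as in lemma \ref{jeden}: the strict bound $j^- > -\frac{1}{2}$ follows from $j \neq 0$ and $|\gamma| < 1$, and since $j^- \in \frac{1}{2}\mathbb{Z}$ this forces $j^- \geq 0$; identically for $l^-$, while $k^\pm \geq 0$ is obvious.

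The argument then splits into the two saturation subcases. Without loss of generality I would treat $k+l=j$ (the case $j+k=l$ is completely symmetric under the swap $j\leftrightarrow l$). The new input compared with lemma \ref{dwa} is that the saturated equality for the original triple turns one triangle inequality into an exact equality after the shift:
\[
 k^+ + l^+ \;=\; \tfrac{1+\gamma}{2}(k+l) \pm \tfrac{1}{2} \;=\; \tfrac{1+\gamma}{2}j \pm \tfrac{1}{2} \;=\; j^+,
\]
so $j^+ \leq k^+ + l^+$ holds automatically (as equality, which is still admissible). The two remaining triangle combinations simplify, via $j+k-l = 2k$ and $j+l-k=2l$, to
\[
 j^+ + k^+ - l^+ \;=\; (1+\gamma)\,k \;\geq\; 0,
 \qquad
 j^+ + l^+ - k^+ \;=\; (1+\gamma)\,l \pm 1.
\]
The first is immediate, and for the second the only delicate possibility is the $-1$ sign, which I would dispatch by the same integrality trick used in the previous two lemmas: the quantity $j^+ + l^+ - k^+$ lies in $\mathbb{Z}$ (since $j^+ + k^+ + l^+ \in \N$ and $2k^+ = (1+\gamma)k \in \N$), while $(1+\gamma)l - 1 > -1$ because $l \neq 0$ and $1+\gamma > 0$; together these force the difference to be a nonnegative integer.

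The same calculation, transported to $(j^-,k^-,l^-)$ by the $\gamma \mapsto -\gamma$ and sign-flip symmetry, yields $(1-\gamma)l\mp 1$, where the potentially negative choice requires $l\neq 0$ and $\gamma<1$, both of which hold by assumption. The subcase $j+k=l$ is literally symmetric, except that the nonnegativity of the analogous critical combination now needs $j\neq 0$ (via $(1\pm\gamma)j \pm 1 > -1$), which is exactly the remaining hypothesis of the lemma. The main (and really only) obstacle is this interplay between the strict inequality coming from $j,l\neq 0$ and $|\gamma|<1$ and the integrality of the triangle differences; this is precisely the reason the hypotheses $j\neq 0$ and $l\neq 0$ cannot be dropped.
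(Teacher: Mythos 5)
Your proof is correct and takes essentially the same route as the paper's one-line argument: both hinge on the observation that the same-sign $\pm\tfrac{1}{2}$ shifts combine with the saturation $k+l=j$ (or $j+k=l$) to give the exact equality $k^\pm+l^\pm=j^\pm$, from which the triangle inequalities follow. You merely spell out the remaining checks (nonnegativity of the minus triple and the other two inequalities, where your integrality detour for $j^++l^+-k^+=(1+\gamma)l\pm 1$ could be shortcut to the identity $j^++l^+-k^+=2l^+\geq 0$) that the paper leaves implicit.
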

\begin{proof}
 Let $k+l=j$. Then $k^++l^+=j^+$ which proves triangle inequalities. The proof is the same for $j+k=l$. One checks in the same way that $(j-j^+,k-k^+,l-l^+)$ is admissible.
\end{proof}
\subsection{The fact that $\chi\not=0$}\label{chi_not_0} 
The fact that $\chi\not=0$ was proved in our previous paper \cite{cEPRL}. Here we recall only the result.
\begin{lm}\label{oldlemma}
 Let $(j^+,k^+,l^+)$,$(j^-,k^-,l^-)$ be admissible. Define $j=j^++j^-$, $k=k^++k^-$, $l=l^++l^-$. Take any non-zero $\eta\in{\rm Inv}\left(\Hil_{j}\otimes\Hil_{k}\otimes\Hil_{l}^*\right)$, $\eta^+\in{\rm Inv}\left(\Hil_{j^+}^*\otimes\Hil_{k^+}^*\otimes\Hil_{l^+}\right)$ and $\eta^-\in{\rm Inv}\left(\Hil_{j^-}^*\otimes\Hil_{k^-}^*\otimes\Hil_{l^-}\right)$. We have:
\[
 {\eta^+}_{A^+ B^+}^{C^+} {\eta^-}_{A^- B^-}^{C^-}  C^{A^+ A^-}_{A} C^{B^+ B^-}_{B} \eta^{A B}_C = \chi\  C^{C^+ C^-}_{C}
\]
for $\chi\not=0$.
\label{nonzero}
\end{lm}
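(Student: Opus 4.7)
The plan is to exploit that, under the hypotheses, the triple $(l^+,l^-,l)$ with $l=l^++l^-$ automatically satisfies the triangle conditions, so the invariant space ${\rm Inv}\left(\Hil_{l^+}\otimes\Hil_{l^-}\otimes\Hil_l^*\right)$ is one-dimensional and spanned by $C^{C^+C^-}_C$. The left-hand side, built entirely out of the $SU(2)$-intertwiners $\eta,\eta^\pm$ and the embeddings $C$, is manifestly an $SU(2)$-invariant tensor of the same type, hence it must equal $\chi\,C^{C^+C^-}_C$ for some scalar $\chi$ depending only on the triples $(j^\pm,k^\pm,l^\pm)$. The whole task reduces to showing $\chi\neq 0$.

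To extract $\chi$ I would contract both sides with a convenient test vector. The natural choice is the highest weight $|l,l\rangle\in\Hil_l$: since $l=l^++l^-$, the embedding $C$ sends $|l,l\rangle$ to $|l^+,l^+\rangle\otimes|l^-,l^-\rangle$ (the unique highest-weight vector on the $\pm$ side), so the right-hand side simply produces $\chi$ with the standard normalisation. For the left-hand side I would expand every factor in a Clebsch--Gordan basis: $\eta^{AB}_{(l,l)}\propto\langle j,m_j;k,m_k\mid l,l\rangle$, analogous expressions for $\eta^\pm$, and $C^{A^+A^-}_{(j,m_j)}=\langle j^+,m_j^+;j^-,m_j^-\mid j,m_j\rangle$ (and likewise for $k$). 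Summing over the magnetic numbers under the constraints $m_j+m_k=l$, $m_j^\pm+m_k^\pm=l^\pm$, $m_j^++m_j^-=m_j$, $m_k^++m_k^-=m_k$ expresses $\chi$ as a finite sum of products of six CG coefficients.

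The crucial observation is that every CG factor appearing in this sum is of the stretched type $\langle a,\alpha;b,\beta\mid a+b,\alpha+\beta\rangle$, i.e.\ the output spin equals the sum of the two input spins. Such coefficients are given by the explicit positive formula $\bigl[\binom{2a}{a+\alpha}\binom{2b}{b+\beta}/\binom{2a+2b}{a+b+\alpha+\beta}\bigr]^{1/2}$, hence strictly positive whenever the projections are in range. All summands contributing to $\chi$ therefore carry the same sign and cannot cancel, giving $\chi\neq 0$.

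The main obstacle I anticipate is the careful bookkeeping needed to confirm that the duality isomorphisms implicit in placing some indices down on $\eta^\pm$ do not introduce relative signs between summands that would destroy the positivity argument. A possibly cleaner alternative would be to recognise $\chi$, up to a uniform normalisation, as the maximally stretched $9j$-symbol $\begin{Bmatrix} j^+ & j^- & j \\ k^+ & k^- & k \\ l^+ & l^- & l \end{Bmatrix}$, where each row and column saturates its triangle (since $j=j^++j^-$, $k=k^++k^-$, $l=l^++l^-$); for such maximally stretched $9j$-symbols there is a classical closed-form factorisation into a single product of square roots, which is manifestly nonzero once admissibility of $(j^\pm,k^\pm,l^\pm)$ is assumed.
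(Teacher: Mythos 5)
Your reduction of the left-hand side to a scalar multiple $\chi$ of $C^{C^+C^-}_C$ via one-dimensionality of ${\rm Inv}\left(\Hil_{l^+}\otimes\Hil_{l^-}\otimes\Hil_l^*\right)$ is fine, and your closing \emph{alternative} --- identifying $\chi$, up to a nonzero normalisation, with the $9j$-symbol in which each triple $(x^+,x^-,x)$ is stretched, and invoking the closed-form reduction of such a symbol --- is essentially the proof the paper gives: it writes the full contraction as $\lambda$ times that $9j$-symbol, reduces it to a stretched $3j$-symbol, and reads off nonvanishing from the explicit square-root formula. Had you carried that route out, you would have reproduced the paper's argument.

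The gap is in your main positivity argument. It is not true that every Clebsch--Gordan coefficient in your sum is of the stretched type. Only the coefficients coming from the embeddings $C$ are stretched (because $j=j^++j^-$, $k=k^++k^-$, $l=l^++l^-$). The coefficients coming from $\eta$, $\eta^+$, $\eta^-$ evaluated at the highest weight are of the form $\langle j,m;\,k,l-m\mid l,l\rangle$ --- highest weight in the \emph{output}, but with $l$ in general strictly less than $j+k$ --- and these are not all positive: the closed formula carries a factor $(-1)^{j-m}$ alternating with $m$ (e.g.\ $\langle \tfrac12,\mp\tfrac12;\tfrac12,\pm\tfrac12\mid 0,0\rangle=\pm\tfrac{1}{\sqrt2}$). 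So the summands do not manifestly share a sign, and the no-cancellation conclusion does not follow as stated. The argument is repairable: in each summand the three alternating signs are $(-1)^{j^+-m_{A^+}}$, $(-1)^{j^--m_{A^-}}$ and $(-1)^{j-m_A}$ with $m_A=m_{A^+}+m_{A^-}$, whose product is $(-1)^{2(j-m_A)}=1$; after that all remaining factors are nonnegative and admissibility of $(j^\pm,k^\pm,l^\pm)$ guarantees at least one strictly positive term. But this sign cancellation, together with the convention-dependent signs from the duality maps that you yourself flagged as the main obstacle, is exactly the step you deferred, so as written the proof is incomplete.
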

Interestingly, this lemma may be proved also using argument different from the one used in \cite{cEPRL}. Now we present it.

First notice, it is enough to show, that, under assumptions above,
\[
 {\eta^+}_{A^+ B^+}^{C^+} {\eta^-}_{A^- B^-}^{C^-}  C^{A^+ A^-}_{A} C^{B^+ B^-}_{B} \eta^{A B}_C  C_{C^+ C^-}^{C}\not=0
\]
for some non-zero $C_{l^+ l^-}^{l}$.

However the expression  ${\eta^+}_{A^+ B^+}^{C^+} {\eta^-}_{A^- B^-}^{C^-}  C^{A^+ A^-}_{A} C^{B^+ B^-}_{B} \eta^{A B}_C  C_{C^+ C^-}^{C}$ is proportional with non-zero proportionality factor to 9j-symbol, i.e.:
\begin{displaymath}
{\eta^+}_{A^+ B^+}^{C^+} {\eta^-}_{A^- B^-}^{C^-}  C^{A^+ A^-}_{A} C^{B^+ B^-}_{B} \eta^{A B}_C  C_{C^+ C^-}^{C}=
\lambda \left\lbrace   \begin{array}{ccc}
j^- & l^- & k^- \\
j^+ & l^+ & k^+ \\
j & l & k
\end{array} \right\rbrace,
\end{displaymath}
where $\lambda\not=0$. The appearance of this $9j$-symbol here is strictly connected with the expansion of fusion coefficient into product of $9j$-symbols done in four-valent case in the article \cite{asymptotics_fusion}. From the properties of $9j$-symbol and admissibility of $(j^+,k^+,l^+)$, $(j^-,k^-,l^-)$ follows that this $9j$-symbol is proportional to a $3j$-symbol (see e.g. equation (37) in \cite{asymptotics_fusion}) with non-zero proportionality constant, i.e.:
\begin{displaymath}
\left\lbrace   \begin{array}{ccc}
j^- & l^- & k^- \\
j^+ & l^+ & k^+ \\
j & l & k
\end{array} \right\rbrace= \mu \left(   \begin{array}{ccc}
l^- & l^+ & l \\
j^--k^- & j^+-k^+ & -(j-k) \\
\end{array} \right),
\end{displaymath}
where $\mu\not=0$.

Recall that $l=l^++l^-$, so 
\begin{align*}
 &\left(   \begin{array}{ccc}
l^- & l^+ & l \\
j^--k^- & j^+-k^+ & -(j-k) \\
\end{array} \right)=\\= (-1)^{l^--l^++j-k}\left[\frac{(2 l^-)!(2l^+)!)}{(2l+1)!}\right. & \left.\frac{(l+j-k)!(l-j+k)!}{(l^-+j^--k^-)!(l^--j^-+k^-)!(l^++j^+-k^+)!(l^+-j^++k^+)!}\right]^{\frac{1}{2}}.
\end{align*}
From admissibility of $(j^+,k^+,l^+)$, $(j^-,k^-,l^-)$ follows that $\left(   \begin{array}{ccc}
l^- & l^+ & l \\
j^--k^- & j^+-k^+ & -(j-k) \\
\end{array} \right)\not=0$. Finally:
\[
 {\eta^+}_{A^+ B^+}^{C^+} {\eta^-}_{A^- B^-}^{C^-}  C^{A^+ A^-}_{A} C^{B^+ B^-}_{B} \eta^{A B}_C  C_{C^+ C^-}^{C}=\lambda\mu\left(   \begin{array}{ccc}
l^- & l^+ & l \\
j^--k^- & j^+-k^+ & -(j-k) \\
\end{array} \right)\not=0.
\]

\section{Proof of the theorem}\label{inductive_hyp}


\subsection{The inductive hypothesis}
We base our prove on the following inductive hypothesis for $n\geq 3$, $n\in \N$:
\begin{quote}
{\bf Hyp $n$:} Suppose that $(k_1,\ldots, k_n)$, $(j^\pm_1,\ldots, j^\pm_n)$ satisfy condition {\bf Con $n$} and that $\omicron\in 
{\rm Inv}\left(\Hil_{k_1}\otimes\cdots\otimes\Hil_{k_n}\right)$. Then, there 
exists 
$$\phi\in {\rm Inv}\left(\Hil_{j^+_1}\otimes\cdots\otimes\Hil_{j^+_n}\right)\otimes {\rm Inv}\left(\Hil_{j^-_1}\otimes\cdots\otimes\Hil_{j^-_n}\right)$$ such that
$\langle \phi, \iota_{k_1\ldots k_n}(\omicron)\rangle\not=0$.
\end{quote}
Notice that we do not restrict to tree-irreducible intertwiners anymore. As mentioned before, this proves injectivity of the EPRL map (theorem \ref{inj-EPRL}) for $n\geq 3$. One needs to check cases $n=1$ and $n=2$ separately but this is straightforward.


\subsection{Proof}\label{full_proof}
Previously we restricted ourselves to tree-irreducible intertwiners, because then the lowest spin $k_{\alpha}$ in the decomposition \eqref{inv_dec} (we denote it by $k'_{\alpha}$) as well as $k_1$ are different than 0, if $n>3$. In general $k'_{\alpha}$ or $k_1$ may be equal 0 for $n>3$ and then our procedure determining $j_{\alpha}^+$ and $j_{\alpha}^-$ may not be applied (lemmas \ref{jeden}, \ref{dwa}, \ref{trzy} require $k_{\alpha}\not=0$, $k_1\not=0$). Actually the case $k'_{\alpha}=0$ (so $k_1=k_2$) and $j_1^+=j^+_2$ is not problematic -- we simply take $j_{\alpha}^+ = 0$ and follow steps 3-5 in section \ref{simpl_ind_step}. The case $k_1=0$ is also simple, because $j_1^++j_1^-=k_1=0$ implies $j_1^\pm=0$ and the inductive step is trivial. Problems appear, when $k'_{\alpha}=0$ and $j_{1}^{+}=j^{+}_2\pm\frac{1}{2}$, $j_{1}^{-}=j^{-}_2\mp\frac{1}{2}$. We treat this case separately.

The inductive step we start (as in simplified case in section \ref{simpl_proof}) by expanding $\omicron$ as in equation \eqref{inv_dec} and finding minimal $k_{\alpha}$ which we call $k'_{\alpha}$. We may perform standard procedure unless we are in the problematic case. Note that in this case 
\begin{equation}
 j^+_i\in \N+\frac{1}{2},\ \ i>1,
\end{equation}
as $j^+_1=j^+_2\pm \frac{1}{2}$ and sequences are ordered.
If this is the case, we expand the intertwiner $\omicron$ one level further, i.e. instead of formula \eqref{inv_dec} we use the following one:
\begin{equation}
 \label{inv_dec_level_2}
 \omicron^{A_1 A_2 \ldots A_n}=\sum_{(k_{\alpha},k_{\beta}) \in K} C^{A_1 A_2}_{A_\alpha} C^{A_\alpha A_3}_{A_\beta} (\omicron^{k_{\alpha} k_{\beta}})^{A_{\beta} A_4 \ldots A_n},
\end{equation}
 where $K:=\{(k_{\alpha},k_{\beta})\in \NN\times\NN: \omicron^{k_{\alpha} k_{\beta}}\not\equiv 0 \}$. We define $K'=K\cap \{(k_{\alpha},k_{\beta}): k_{\beta}<k_3 \}$. There are two cases $K'=\emptyset$ and $K'\not=\emptyset$ which we describe in next two sections. The procedure is summarised by the diagram below.

Importantly notice that in case $n=4$, we either obtain $k_{\alpha}'>0$ or $k_{\alpha'}=0$, $j_1^+=j_2^+$. In this case notice that either $j_1^+\in \N$, $j_2^+\in \N$ or $j_1^+\in \N+\frac{1}{2}$, $j_2^+\in \N+\frac{1}{2}$ (this follows from the fact that $j_1^++j_2^++j_3^++j_4^+\in \N$ and from ordering of $k_i$) -- as a result if $k_{\alpha}'=0$ then $j_1^+=j_2^+$. This means that when $n=4$, the inductive step from simplified case may be used. As a result the check of initial conditions done in the proof of tree-irreducible case is sufficient in general case presented here.

\begin{center}
\vspace{0.5cm}
\small
\psframebox[linearc=0.5,cornersize=absolute,framesep=10pt]{%
  \psset{shadowcolor=black!70,blur=true}%
  \begin{psmatrix}[rowsep=0.4,colsep=0.5]
\psframebox[shadow=true]{\begin{tabular}{c} Ordering of $k_1,\ldots,k_n$ such that first are the $k_i$ \\ with $j_i^+\in \N$ and then $k_i$ with $j_i^+\in \N+\frac{1}{2}$. \end{tabular}
}\\
   \psframebox[fillstyle=solid,fillcolor=Pink,shadow=true]{$k_1=0$ ?} &
       \psframebox[shadow=true]{\begin{tabular}{c} It follows, that $j_1^\pm=0$ and\\ inductive step is trivial \end{tabular}
} \\
    \psframebox[fillstyle=solid,fillcolor=Pink,shadow=true]{$k'_{\alpha}>0$ ?} &
       \psframebox[shadow=true]{\begin{tabular}{c} Follow inductive step in \\ simplified case -- section \ref{simpl_ind_step} \end{tabular}
} \\
    \psframebox[fillstyle=solid,fillcolor=Pink,shadow=true]{$j^+_1=j_2^+$} &
       \psframebox[shadow=true]{\begin{tabular}{c} Take $j_{\alpha}^{\pm}=0$ and follow point 3-5 \\ in inductive step of simplified case \end{tabular}
} \\
    \psframebox[fillstyle=solid,fillcolor=Pink,shadow=true]{$k_2\not= k_3$} &
       \psframebox[shadow=true]{\begin{tabular}{c} Exchange $k_2$ and $k_3$ and follow \\ inductive step in simplified case \end{tabular}
}
    ~\\
    \psframebox[shadow=true,fillstyle=solid,fillcolor=Pink]{$K'\not=\emptyset$ ? (we have $j^+_1=j_2^+\pm\frac{1}{2}$)} & \psframebox[shadow=true]{Follow steps in section \ref{K_nonempty}} \\
    \psframebox[shadow=true]{Follow steps in section \ref{K_empty}}
    \ncline{->}{2,1}{3,1}<{\textcolor{red}{No}}
    \ncline{->}{3,1}{4,1}<{\textcolor{red}{No}}
    \ncline{->}{3,1}{3,2}^{\textcolor{red}{\hspace{-60px} Yes}}
    \ncline{->}{4,1}{4,2}^{\textcolor{red}{\hspace{-60px} Yes}}
    \ncline{->}{6,1}{6,2}^{\textcolor{red}{Yes}}
    \ncline{->}{5,1}{5,2}^{\textcolor{red}{\hspace{-60px} Yes}}
    \ncline{->}{4,1}{5,1}<{\textcolor{red}{No}}
    \ncline{->}{5,1}{6,1}<{\textcolor{red}{No}}
    \ncline{->}{6,1}{7,1}<{\textcolor{red}{No}}
    \ncline{->}{2,1}{2,2}^{\textcolor{red}{\hspace{-60px} Yes}}
    \ncline{->}{1,1}{2,1}
    \ncbar[angleA=-90,armB=0,nodesepB=0.25]{->}{3,3}{4,1}
    \end{psmatrix}%
}
\end{center}


\subsection{The case $K'\not=\emptyset$}\label{K_nonempty}
\begin{enumerate}
\item Find $k''_{\alpha}$ and $k_{\beta}'$, such that:
\begin{equation}
 k_\alpha''=\min\{k_\alpha\colon \exists k_\beta,\ (k_\alpha,k_\beta)\in K'\}
\end{equation}
and
\begin{equation}\label{minimal_k_beta}
 k_\beta'=\min\{k_\beta\colon (k_\alpha',k_\beta)\in K'\}
\end{equation}
They exist, because $K'$ is non-empty.
 \item Notice that $k_{\alpha}''>0$ because $(k_3,k_\alpha'',k_\beta')$ is admissible and $k_\beta'<k_3$. We define $j_{\alpha}^{\pm}$ using the procedure from section \ref{simpl_ind_step}. 

If $k_{\beta}'>0$, we use the same procedure (but for triple $(k_\alpha'',k_3,k_\beta')$) to define $j_{\beta}^{\pm}$ and if $k_{\beta}=0$, we take $j_{\beta}^{\pm}=0$. Let us check know that $j_{\alpha}^{\pm}$ and $j_{\beta}^{\pm}$ is a good choice.
\begin{itemize}
 \item $(j_1^{\pm},j_2^{\pm},j_{\alpha}^{\pm})$ are admissible -- this is guaranteed by procedure from section \ref{simpl_ind_step};
 \item $(j_{\alpha}^{\pm},j_3^{\pm},j_\beta^{\pm})$ are admissible: 

If $k_{\beta}'>0$ then this is guaranteed by procedure from section \ref{simpl_ind_step}. 

If $k_{\beta}'=0$, then $k_{\alpha}''=k_3 $. As a result we have $j_3^{+} -\frac{1}{2}\leq j_\alpha^{+}\leq j_3^{+}+\frac{1}{2}$. However $j_{\alpha}^+\in \N+\frac{1}{2}$ ($j_{1}^{+}=j_{2}^{+}+\frac{1}{2}$ or $j_{1}^{+}=j_{2}^{+}-\frac{1}{2}$, so $j_1^{+}+j_2^{+}$ is not an integer). From the ordering $j^+_3\in \N+\frac{1}{2}$. Finally we have $j_{\alpha}^{+}=j_3^{+}$ and $j_{\alpha}^{-}=k''_{\alpha}-j_{\alpha}^{+}=k_3-j_3^{+}=j_3^{-}$. Obviously $(j_{\alpha}^{\pm},j_3^{\pm},0)$ are admissible.
\item $j^{\pm}_{\beta}+j_4^{\pm}+\ldots+j_n^{\pm}\in \N$

We know that $j_1^{+}\in \N$, $j_2^{+}\in \N+\frac{1}{2}$,$j_3^{+}\in \N+\frac{1}{2}$, so $j_{\beta}^{+}\in \N$ and $j_{1}^{+}+j_{2}^{+}+j_{3}^{+}\in \N$. Finally from $j^{+}_{1}+j_2^{+}+ j_3^{+}+j_{4}^{+}\ldots+j_n^{+}\in \N$, follows that $j_{\beta}^{+}+j_{4}^{+}\ldots+j_n^{+}\in \N$.

Using the facts that $j_{\beta}^{-}+j_{4}^{-}\ldots+j_n^{-}=k'_{\beta}+k_{4}\ldots+k_n - (j_{\beta}^{+}+j_{4}^{+}\ldots+j_n^{+})$ and $k'_{\beta}+k_4+\ldots+k_n\in \N$, we obtain $j_{\beta}^{-}+j_{4}^{-}\ldots+j_n^{-}\in \N$.
\item  We see that $\frac{1+\gamma}{2} k'_{\beta} - \frac{1}{2} \leq j_{\beta}^+\leq \frac{1+\gamma}{2} k'_{\beta} + \frac{1}{2}$. We have also $j^+_4\in \N+\frac{1}{2}$ and so the ordering property is satisfied.
\end{itemize}
Eventually, {\bf Con $n-2$} is fulfilled for $(k'_{\beta},k_4,\ldots,k_n)$ and $(j_{\beta}^{\pm},j_4^{\pm},\ldots,j_n^{\pm})$. 
\item 
From {\bf Hyp $n-2$} follows that for $\omicron^{k''_{\alpha} k'_{\beta}}$ from \eqref{inv_dec_level_2} there exists \[\phi^{k''_{\alpha} k'_{\beta}}\in {\rm Inv}\left(\Hil_{j^+_{\alpha}}\otimes\Hil_{j^+_{\beta}}\otimes\cdots\otimes\Hil_{j^+_n}\right)\otimes {\rm Inv}\left(\Hil_{j^-_{\alpha}}\otimes\Hil_{j^+_{\beta}}\otimes\cdots\otimes\Hil_{j^-_n}\right),\] such that 
\[\langle \phi^{k''_{\alpha} k'_{\beta}}, \iota'_{k'_\beta \ldots k_n}(\omicron^{k''_{\alpha} k'_{\beta}})\rangle\not=0.\]
\item Having defined $\phi^{k''_{\alpha} k'_{\beta}}$, we construct $\phi$:
\[
 \phi^{A_1^+\ldots A_n^+, A_1^{-}\ldots A_n^{-}}:=C^{A_1^+ A_2^+}_{A_{\alpha}^+} C^{A_{\alpha}^+ A_3^+}_{A_{\beta}^+} C^{A_1^- A_2^-}_{A_{\alpha}^-} C^{A_{\alpha}^- A_3^-}_{A_{\beta}^-} (\phi^{k''_{\alpha} k'_{\beta}})^{A_{\beta}^+ A_4^+\ldots A_n^+, A_{\beta}^- A_4^{-}\ldots A_n^{-}}
\]
\item The $\phi$ constructed in previous point is the $\phi$ we are looking for, i.e.
$\langle \phi, \iota'_{k_1\ldots k_n}(\omicron)\rangle\not=0$. We now prove this statement.
\begin{enumerate}
 \item First, using equation \eqref{inv_dec_level_2} write $\langle \phi, \iota'_{k_1\ldots k_n}(\omicron)\rangle$ as a sum:
\begin{equation} \label{level_2_sum}
 \langle \phi, \iota'_{k_1\ldots k_n}(\omicron)\rangle = \sum_{(k_{\alpha}, k_{\beta})\in K} \langle \phi, \iota'_{k_1\ldots k_n}(C^{k_1 k_2}_{k_\alpha}\circ C^{k_\alpha k_3}_{k_\beta}\circ \omicron^{k_{\alpha} k_{\beta}})\rangle,
\end{equation}
where ${(C^{k_1 k_2}_{k_\alpha}\circ C^{k_\alpha k_3}_{k_\beta} \circ \omicron^{k_{\alpha} k_{\beta}})}^{A_1 A_2 \ldots A_n}:= C^{A_1 A_2}_{A_\alpha} {C}^{A_\alpha A_3}_{A_\beta} (\omicron^{k_{\alpha} k_{\beta}})^{A_{\beta} A_4 \ldots A_n} $.


\item Let us compute $\langle \phi, \iota'_{k_1\ldots k_n}(C^{k_1 k_2}_{k_\alpha}\circ C^{k_\alpha k_3}_{k_\beta}\circ \omicron^{k_{\alpha} k_{\beta}})\rangle$ (see fig. \ref{K_nonempty}):
\begin{figure}
 \includegraphics[width=\textwidth]{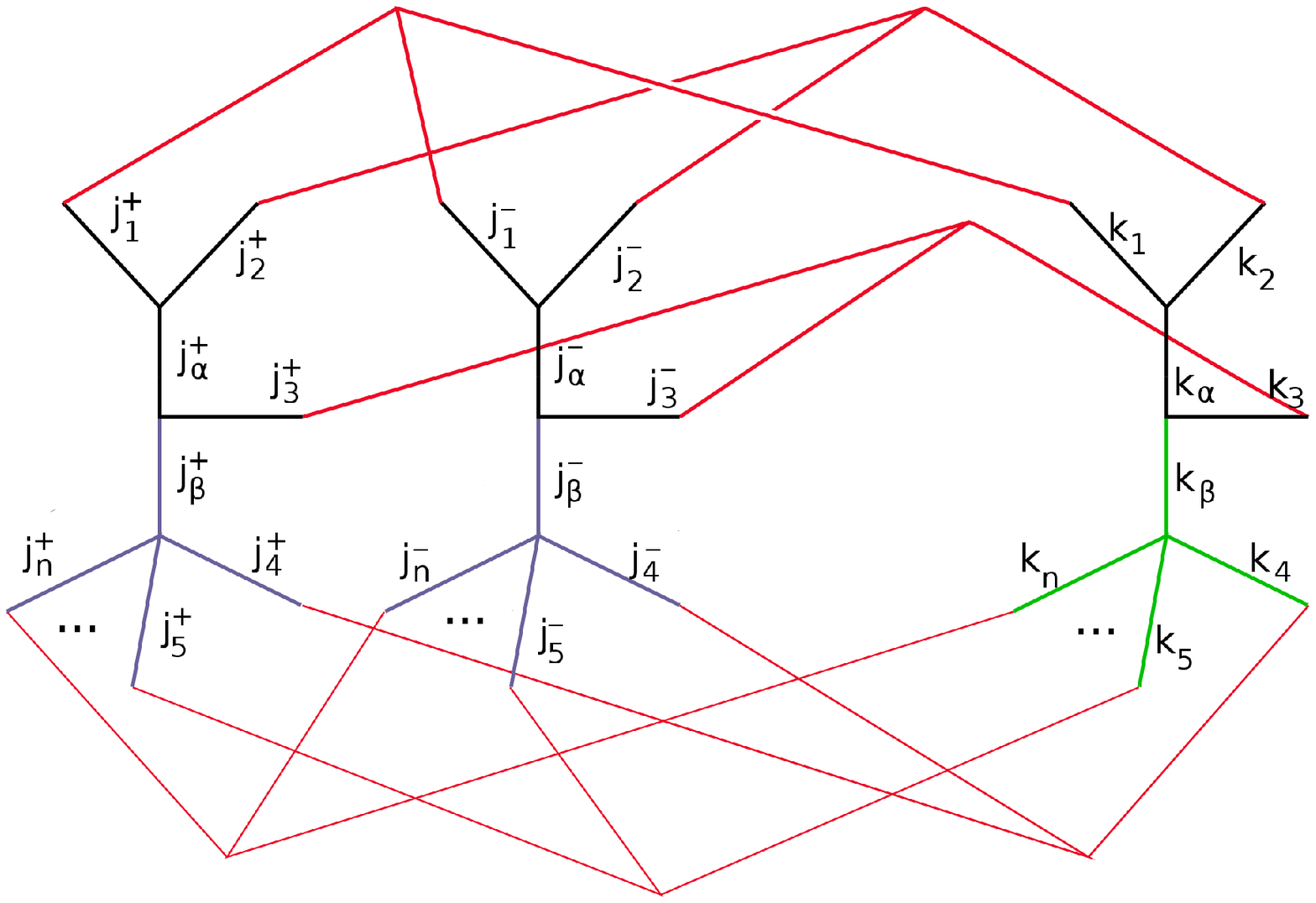}
\caption{Schematic picture of a term in the sum \eqref{level_2_sum}, i.e.
$
 \langle \phi, {\color{czerwony} \iota'_{k_1\ldots k_n}}(C^{k_1 k_2}_{k_\alpha}\circ C^{k_\alpha k_3}_{k_\beta}\circ {\color{zielony} \omicron^{k_{\alpha} k_{\beta}}})\rangle = \langle C^{j_1^+ j_2^+}_{j_{\alpha}^+}\circ C^{j_{\alpha}^+ j_3^+}_{j_{\beta}^+} \circ C^{j_1^- j_2^-}_{j_{\alpha}^-}\circ C^{j_{\alpha}^- j_3^-}_{j_{\beta}^-} \circ {\color{niebieski} \phi^{k''_{\alpha} k'_{\beta}}}, {\color{czerwony}\iota'_{k_1\ldots k_n}}(C^{k_1 k_2}_{k_\alpha}\circ C^{k_\alpha k_3}_{k_\beta}\circ {\color{zielony} \omicron^{k_{\alpha} k_{\beta}}})\rangle 
$}
\label{fig:K_nonempty}
\end{figure}

\begin{align*}
\langle \phi, \iota'_{k_1\ldots k_n}(C^{k_1 k_2}_{k_\alpha}\circ C^{k_\alpha k_3}_{k_\beta}\circ \omicron^{k_{\alpha} k_{\beta}})\rangle=
({\phi}^{k''_{\alpha} k'_{\beta}})^\dagger_{A_{\beta}^+ A_4^+\ldots A_n^+, A_{\beta}^- A_4^{-}\ldots A_n^{-}} C_{A_1^+ A_2^+}^{A_{\alpha}^+} & C_{A_{\alpha}^+ A_3^+}^{A_{\beta}^+} C_{A_1^- A_2^-}^{A_{\alpha}^-} C_{A_{\alpha}^- A_3^-}^{A_{\beta}^-}\\ C_{A_1}^{A_1^+ A_1^-} \ldots C_{A_n}^{A_n^+ A_n^-} C^{A_1 A_2}_{A_\alpha} {C}^{A_\alpha A_3}_{A_\beta} (\omicron^{k_{\alpha} k_{\beta}})^{A_{\beta} A_4 \ldots A_n} = {\color{blue} C_{A_1^+ A_2^+}^{ A_{\alpha}^+}  C_{A_1^- A_2^-}^{A_{\alpha}^-} C_{A_1}^{A_1^+ A_1^-} }&{\color{blue} C_{A_2}^{A_2^+ A_2^-} C^{A_1 A_2}_{A_\alpha}} \\ {\color{red} C_{A_{\alpha}^+ A_3^+}^{A_{\beta}^+} C_{A_{\alpha}^- A_3^-}^{A_{\beta}^-}  C_{A_3}^{A_3^+ A_3^-}  {C}^{{A_\alpha} A_3}_{A_\beta}} ({\phi}^{k''_{\alpha} k'_{\beta}})^\dagger_{A_{\beta}^+ A_4^+\ldots A_n^+, A_{\beta}^- A_4^{-}\ldots A_n^{-}} C_{A_4}^{A_4^+ A_4^-} \ldots C_{A_n}^{A_n^+ A_n^-} & (\omicron^{k_{\alpha} k_{\beta}})^{A_{\beta} A_4 \ldots A_n}
\end{align*}

Using lemma \ref{nonzero} we obtain, that for some $\xi_1\not=0$
\[ {\color{blue} C_{A_1^+ A_2^+}^{A_{\alpha}^+} C_{A_1^- A_2^-}^{A_{\alpha}^-} C_{A_1}^{A_1^+ A_1^-} C_{A_2}^{A_2^+ A_2^-} C^{A_1 A_2}_{A_\alpha} } =\left\{\begin{array}{ll}
                    0, & k_\alpha>j_\alpha^++j^-_\alpha\\
                     \xi_1 {\color{DarkOrchid} C_{A_\alpha}^{A_\alpha^+ A_\alpha^-}}, & k_\alpha=j_\alpha^++j^-_\alpha
                   \end{array}\right.
\] 
Applying this lemma second time we get, that for some $\xi_2\not=0$
\[ {\color{red} C_{A_{\alpha}^+ A_3^+}^{A_{\beta}^+} C_{A_{\alpha}^- A_3^-}^{A_{\beta}^-} }{\color{DarkOrchid} C_{A_\alpha}^{A_\alpha^+ A_\alpha^-}}{\color{red} C_{A_3}^{A_3^+ A_3^-}  {C}^{{A_\alpha} A_3}_{A_\beta}} =\left\{\begin{array}{ll}
                    0, & k_\beta>j_\beta^++j^-_\beta\\
                    \xi_2 C_{A_\beta}^{A_\beta^+ A_\beta^-}, & k_\beta=j_\beta^++j^-_\beta
                   \end{array}\right.\]
Finally, for $\xi:=\xi_1 \xi_2 \not= 0$ :
\begin{equation}\label{level_2_eq}
 \langle \phi, \iota'_{k_1\ldots k_n}(C^{k_1 k_2}_{k_\alpha}\circ C^{k_\alpha k_3}_{k_\beta} \omicron^{k_{\alpha} k_{\beta}})\rangle=\left\{\begin{array}{ll}
                    0, & k_\alpha>k_\alpha''\ {\rm or}\ k_\beta>k_\beta'\\
                    \xi\langle \phi^{k''_{\alpha} k'_{\beta}}, \iota'_{k'_\beta \ldots k_n}(\omicron^{k''_{\alpha} k'_{\beta}})\rangle, & k_\alpha=k_\alpha'\ {\rm and}\ k_\beta=k_\beta'\\
  *, & {\rm otherwise}
                   \end{array}\right.
\end{equation}
\item Now we use formula just obtained \eqref{level_2_eq}  to calculate the sum \eqref{level_2_sum}.

First notice that $k_\beta'\leq k_3$. As a result the elements in the sum \eqref{level_2_sum} with $k_\beta>k_3$ are vanishing and the sum is actually over the $K'$:
\[
 \langle \phi, \iota'_{k_1\ldots k_n}(\omicron)\rangle = \sum_{(k_{\alpha}, k_{\beta})\in K'} \langle \phi, \iota'_{k_1\ldots k_n}(C^{k_1 k_2}_{k_\alpha}\circ C^{k_\alpha k_3}_{k_\beta}\circ \omicron^{k_{\alpha} k_{\beta}})\rangle
\]
However from the definition of $k_\alpha''$ and $k_\beta'$ follows that
\[
 \langle \phi, \iota'_{k_1\ldots k_n}(\omicron)\rangle = \sum_{k_\alpha\geq k_\alpha'', k_\beta\geq k_\beta'} \langle \phi, \iota'_{k_1\ldots k_n}(C^{k_1 k_2}_{k_\alpha}\circ C^{k_\alpha k_3}_{k_\beta}\circ \omicron^{k_{\alpha} k_{\beta}})\rangle
\]
Finally, using \eqref{level_2_eq} we obtain:
\[
 \langle \phi, \iota'_{k_1\ldots k_n}(\omicron)\rangle = \xi \langle \phi^{k''_{\alpha} k'_{\beta}}, \iota'_{k'_\beta \ldots k_n}(\omicron^{k''_{\alpha} k'_{\beta}})\rangle
\]
and
\[
 \langle \phi, \iota'_{k_1\ldots k_n}(\omicron)\rangle \not= 0.
\]
\end{enumerate}
\end{enumerate}
\subsection{The case $K'=\emptyset$}\label{K_empty}
Let us change the basis used previously in the decomposition of $\omicron$ \eqref{inv_dec_level_2}:
\begin{equation}\label{inv_dec_level_2_empty}
 \omicron^{A_1 A_2 \ldots A_n}=\sum_{(k_{\widetilde\alpha},k_{\beta}) \in L} {C}^{A_2 A_3}_{A_{\widetilde \alpha}} C^{A_1 A_{\widetilde \alpha}}_{A_\beta}  (\omicron^{k_{\widetilde\alpha} k_{\beta}})^{A_{\beta} A_4 \ldots A_n},
\end{equation}
where $L:=\{(k_{\widetilde\alpha},k_{\beta})\in \NN\times\NN: \omicron^{k_{\widetilde\alpha} k_{\beta}}\not\equiv 0 \}$.

 We define:
\[
 L'=L\cap \{(k_{\widetilde \alpha},k_\beta): k_\beta=k_3 \}
\]
This set is non-empty, because $k_\beta=k_3$ was present in the decomposition \eqref{inv_dec_level_2}. In fact $(k_\alpha=0, k_\beta=k_3)\in K$ and so $k_\beta=k_3$ occurs also in the decomposition \eqref{inv_dec_level_2_empty}.
\begin{enumerate}
 \item Find $k'_{\widetilde \alpha}$ such that:
\[
 k'_{\widetilde \alpha}={\rm min}\{k_{\widetilde \alpha}: \exists k_\beta,\ (k_{\widetilde \alpha},k_\beta)\in L'\}
\]
Note that $k_{\beta}<k_3$ does not appear in this decomposition because they are absent in the decomposition \eqref{inv_dec_level_2}.
Note also that if we defined $k'_\beta$ in analogous way to \eqref{minimal_k_beta}, i.e. $k_\beta'=\min\{k_\beta\colon (k_{\widetilde \alpha}',k_\beta)\in L'\}$, we would obtain trivially $k_{\beta}'=k_3$. In this section one may think that $k_{\beta}'=k_3$. However we will not write this $k_{\beta}'$ explicitly.

 
\item

We now define $j_{\widetilde \alpha}^\pm$, $j_\beta^\pm$. Note, that only $j_{\beta}^\pm$ (but not $j^+_{\widetilde\alpha}$) has to be of special form to match {\bf Con $n-2$}. The requirements for $j_{\widetilde \alpha}^\pm$ may be limited to assure admissibility conditions and the condition that $j_{\widetilde \alpha}^++j_{\widetilde \alpha}^-=k_{\widetilde \alpha}$. We will use this freedom to define $j_{\widetilde \alpha}^\pm$, $j_\beta^\pm$.

Note that, in our case $(j_2^\pm,j_3^\pm,j_{{\widetilde\alpha}}^\pm)$ are admissible iff $j_{{\widetilde \alpha}}^{\pm}\leq 2 j_2^{\pm}$ and $j_{\widetilde \alpha}^\pm\in \N$ (because $j^+_2=j^+_3$). We also have $j_1^+=j_2^+\pm\frac{1}{2}$. The choices of $j^\pm_{\widetilde\alpha}$ and $j^\pm_\beta$ in this case are given be the following diagram.
\begin{center}
\vspace{0.5cm}
\small
\psframebox[linearc=0.5,cornersize=absolute,framesep=10pt]{%
  \psset{shadowcolor=black!70,blur=true}%
  \begin{psmatrix}[rowsep=0.4,colsep=0.5]
          & \psframebox[fillstyle=solid,fillcolor=Pink,shadow=true]{$j^+_1=j^+_2+\frac{1}{2}$?} &
\psframebox[fillstyle=solid,fillcolor=Pink,shadow=true]{$j^+_1=j^+_2-\frac{1}{2}$?}\\
\psframebox[fillstyle=solid,fillcolor=Pink,shadow=true]{$k_\alpha'<2k_2$?}&
   \psframebox[shadow=true]{\begin{tabular}{c}
      We define $j^\pm_\beta=j^\pm_2\pm\frac{1}{2}$
      and\\ take any  $j^\pm_{\widetilde\alpha}$ satisfying\\ 
      $j_{\widetilde \alpha}^+\leq 2 j_2^+$, $j_{\widetilde \alpha}^-\leq 2 j_2^- - 1$,\\ $j_{\widetilde \alpha}^\pm\in\N$, $j_{\widetilde \alpha}^++j_{\widetilde \alpha}^-=k'_{\widetilde \alpha}$
      \end{tabular}} &
   \psframebox[shadow=true]{\begin{tabular}{c}
      We define $j^\pm_\beta=j^\pm_2\mp\frac{1}{2}$
      and\\ take any  $j^\pm_{\widetilde\alpha}$ satisfying\\ 
      $j_{\widetilde \alpha}^+\leq 2 j_2^+-1$, $j_{\widetilde \alpha}^-\leq 2 j_2^-$,\\ $j_{\widetilde \alpha}^\pm\in\N$, $j_{\widetilde \alpha}^++j_{\widetilde \alpha}^-=k'_{\widetilde \alpha}$
      \end{tabular}} \\
\psframebox[fillstyle=solid,fillcolor=Pink,shadow=true]{$k_\alpha'<2k_2$?}&
   \psframebox[shadow=true]{\begin{tabular}{c}
    We define $j_\beta^\pm=j^\pm_2\mp \frac{1}{2}$\\
    and $j^\pm_{\widetilde\alpha}=2j^\pm_2$	
                                                            \end{tabular}
}&
   \psframebox[shadow=true]{\begin{tabular}{c}
    We define $j_\beta^\pm=j^\pm_2\pm \frac{1}{2}$\\
    and $j^\pm_{\widetilde\alpha}=2j^\pm_2$	
                                                            \end{tabular}
}
 \end{psmatrix}%
}
\end{center}
Let us justify this choice. Suppose that $j_1^+=j_2^++\frac{1}{2}$ (the case $j_1^+=j_2^+-\frac{1}{2}$ is analogous).

\begin{itemize}
 \item {\bf Case of $k'_{\widetilde \alpha} < 2 k_2$.}
  Note that $k_2\not=0$. As a result $j_2^-\geq \frac{1}{2}$ ($|\gamma|<1$) and there exist $j_{\widetilde \alpha}^\pm$, such that $j_{\widetilde \alpha}^+\leq 2 j_2^+$, $j_{\widetilde \alpha}^-\leq 2 j_2^- - 1$, $j_{\widetilde \alpha}^\pm\in \N$. It is possible to choose $j_{\widetilde\alpha}^\pm$ satisfying $j_{\widetilde\alpha}^++j_{\widetilde\alpha}^-=k'_{\widetilde \alpha}$, because $j_{\widetilde \alpha}^++j_{\widetilde \alpha}^-\leq 2 (j_2^++j_2^-) -1 \Rightarrow j_{\widetilde \alpha}^++j_{\widetilde \alpha}^-< 2k_2$ (and $k_{\widetilde\alpha}'<2k_2$).

It is straightforward to check that $(j_2^\pm,j_3^\pm,j_{{\widetilde \alpha}}^\pm)$, $(j_1^\pm,j_{{\widetilde \alpha}}^\pm,j_{{\beta}}^\pm)$ are admissible:
\begin{equation}
0=|j_2^\pm-j_3^\pm|\leq j^\pm_{\widetilde\alpha}\leq j^\pm_2+j^\pm_3=2j^\pm_2,\quad
0=|j_1^\pm-j_\beta^\pm|\leq j^\pm_{\widetilde\alpha}\leq j^\pm_1+j^\pm_\beta
\end{equation}
but $j^+_1+j^+_\beta=2j^+_2$ and $j^-_1+j^-_\beta=2j^-_2-1$.

\item {\bf Case of $k'_{\widetilde \alpha}= 2 k_2$.}

As previously pointed out if $k_2\not=0$, then $j_2^\pm\geq \frac{1}{2}$. It follows that $2j^\pm_2\geq 1$. So $j^\pm_{\widetilde\alpha}\geq 1$ and $j^\pm_{\widetilde\alpha}\in \N$, $j^+_{\widetilde\alpha}+j^-_{\widetilde\alpha}=k_{\widetilde\alpha}'$.

It is straightforward to check that $(j_2^\pm ,j_3^\pm,j_{{\widetilde \alpha}}^\pm)$, $(j_1^\pm,j_{{\widetilde \alpha}}^\pm,j_{{\beta}}^\pm)$ are admissible:
\begin{equation}
 0=|j_2^\pm-j_3^\pm|\leq j^\pm_{\widetilde\alpha}\leq j^\pm_2+j^\pm_3=2j^\pm_2,\ \ 
1=|j_1^\pm-j_\beta^\pm|\leq j^\pm_{\widetilde\alpha}\leq j^\pm_1+j^\pm_\beta=2j^\pm_2
\end{equation}
\end{itemize}

\item 
 Note that because $j^+_1$ and $j^+_{\widetilde\alpha}$ are natural then also $j_{\beta}^+\in\N$. Recall also that $j_1^+\in\N$,$j_2^+, j_3^+\in\N+\frac{1}{2}$ and $j_1^++\ldots j_n^+\in \N$. As a result $j_\beta^++j_4^++\ldots j_n^+\in \N$ and $j_\beta^-+j_4^-+\ldots j_n^-=k_3+k_4+\ldots +k_n - (j_\beta^++j_4^++\ldots j_n^+)\in \N$.

We have also that $\frac{1+\gamma}{2} k_3 - \frac{1}{2} \leq j_{\beta}^+\leq \frac{1+\gamma}{2} k_3 + \frac{1}{2}$ and $j^+_4\in \N+\frac{1}{2}$, so {\bf Con $n-2$} is fulfilled for $(k_3,k_4,\ldots,k_n)$ and $(j_{\beta}^{\pm},j_4^{\pm},\ldots,j_n^{\pm})$. 


\item

From {\bf Hyp $n-2$} follows that for $\omicron^{k'_{\widetilde \alpha} k_3}$ from \eqref{inv_dec_level_2_empty} there exists \[\phi^{k'_{\widetilde\alpha} k_3}\in {\rm Inv}\left(\Hil_{j^+_{\alpha}}\otimes\Hil_{j^+_{\beta}}\otimes\cdots\otimes\Hil_{j^+_n}\right)\otimes {\rm Inv}\left(\Hil_{j^-_{\alpha}}\otimes\Hil_{j^+_{\beta}}\otimes\cdots\otimes\Hil_{j^-_n}\right),\] such that 
\[\langle \phi^{k'_{\widetilde\alpha} k_3}, \iota'_{k_3 \ldots k_n}(\omicron^{k'_{\widetilde \alpha} k_3})\rangle\not=0.\]


\item Having defined $\phi^{k'_{\widetilde\alpha} k_3}$, we construct $\phi$:
\[
 \phi^{A_1^+\ldots A_n^+, A_1^{-}\ldots A_n^{-}}:= C^{A_2^+ A_3^+}_{A_{\widetilde\alpha}^+} C^{A_{\widetilde\alpha}^+ A_1^+ }_{A_{\beta}^+} C^{A_2^- A_3^-}_{A_{\widetilde \alpha}^-} C^{A_{\widetilde\alpha}^- A_1^-}_{A_{\beta}^-} (\phi^{k'_{\widetilde\alpha} k_3})^{A_{\beta}^+ A_4^+\ldots A_n^+, A_{\beta}^- A_4^{-}\ldots A_n^{-}}
\]


\item The $\phi$ constructed in previous point is the $\phi$ we are looking for, i.e.
$\langle \phi, \iota'_{k_1\ldots k_n}(\omicron)\rangle\not=0$. We now prove this statement.
\begin{enumerate}
 \item First, using equation \eqref{inv_dec_level_2_empty} write $\langle \phi, \iota'_{k_1\ldots k_n}(\omicron)\rangle$ as a sum:
\begin{equation} \label{level_2_empty_sum}
 \langle \phi, \iota'_{k_1\ldots k_n}(\omicron)\rangle = \sum_{(k_{\widetilde\alpha}, k_{\beta})\in L} \langle \phi, \iota'_{k_1\ldots k_n}({C}^{k_2 k_3}_{k_{\widetilde \alpha}} \circ C^{k_1 k_{\widetilde \alpha}}_{k_\beta} \circ \omicron^{k_{\widetilde\alpha} k_{\beta}})\rangle,
\end{equation}
where ${({C}^{k_2 k_3}_{k_{\widetilde \alpha}} \circ C^{k_1 k_{\widetilde \alpha}}_{k_\beta} \circ \omicron^{k_{\widetilde\alpha} k_{\beta}})}^{A_1 A_2 \ldots A_n}:=  C^{A_2 A_3}_{A_{\widetilde \alpha}} {C}^{A_1 A_{\widetilde\alpha}}_{A_\beta} (\omicron^{k_{\widetilde\alpha} k_{\beta}})^{A_{\beta} A_4 \ldots A_n} $.

\item Let us compute $\langle \phi, \iota'_{k_1\ldots k_n}({C}^{k_2 k_3}_{k_{\widetilde \alpha}} \circ C^{k_1 k_{\widetilde \alpha}}_{k_\beta}\circ \omicron^{k_{\widetilde\alpha} k_{\beta}})\rangle$ (see fig. \ref{K_empty}):

\begin{figure}
 \includegraphics[width=\textwidth]{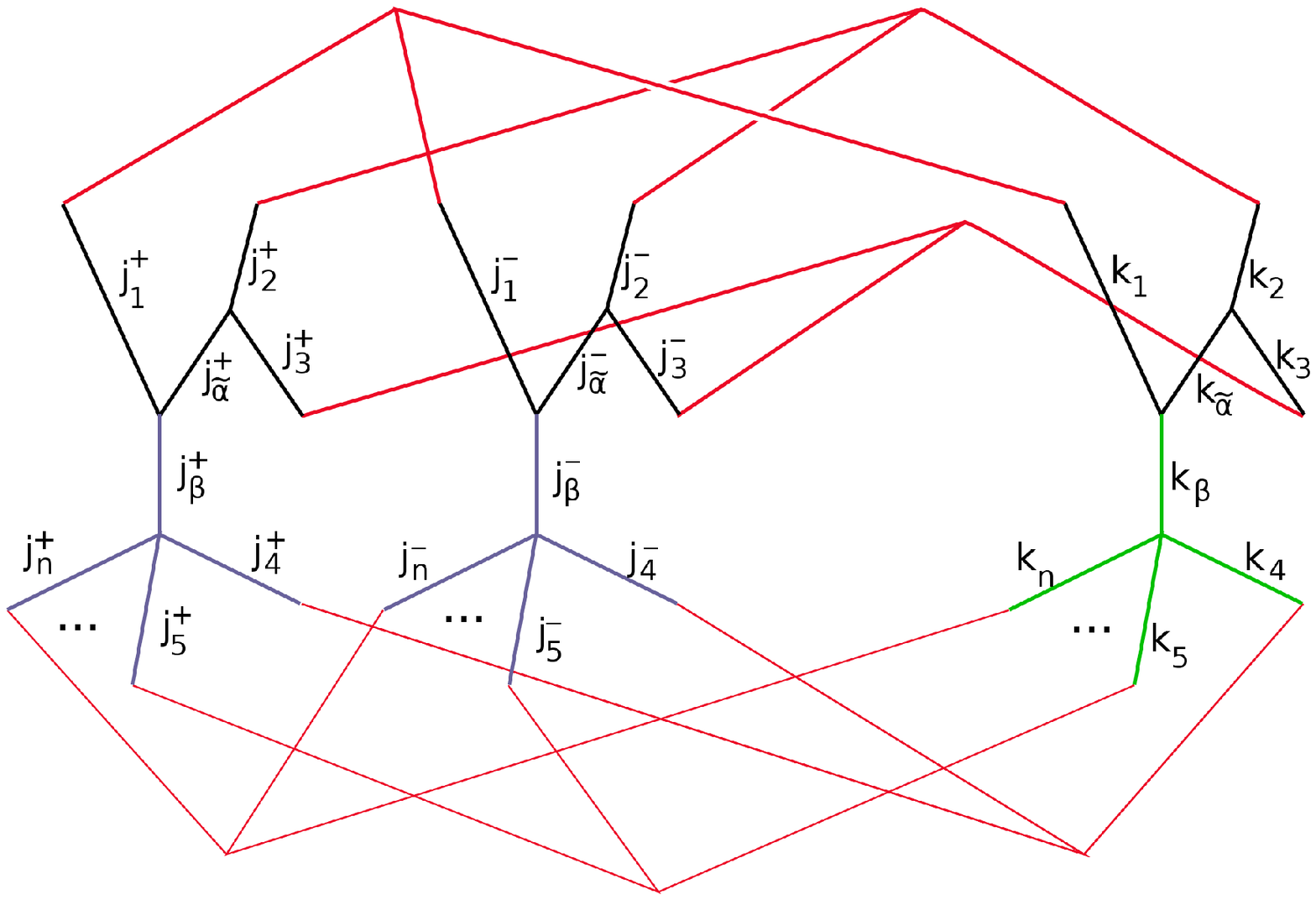}
\caption{Schematic picture of a term in the sum \eqref{level_2_empty_sum}, i.e.
$
\langle \phi, {\color{czerwony} \iota'_{k_1\ldots k_n}}({C}^{k_2 k_3}_{k_{\widetilde \alpha}} \circ C^{k_{\widetilde \alpha} k_1}_{k_\beta} \circ {\color{zielony} \omicron^{k_{\widetilde\alpha} k_{\beta}}})\rangle
 = \langle C^{j_2^+ j_3^+}_{j_{\widetilde\alpha}^+} \circ C^{j_{\widetilde\alpha}^+ j_1^+ }_{j_{\beta}^+} \circ C^{j_2^- j_3^-}_{j_{\widetilde \alpha}^-}\circ C^{j_{\widetilde\alpha}^- j_1^-}_{j_{\beta}^-}\circ {\color{niebieski} \phi^{k'_{\widetilde\alpha} k_3}}, {\color{czerwony}\iota'_{k_1\ldots k_n}}({C}^{k_2 k_3}_{k_{\widetilde \alpha}} \circ C^{k_{\widetilde \alpha} k_1}_{k_\beta} \circ {\color{zielony} \omicron^{k_{\widetilde\alpha} k_{\beta}}})\rangle
$}
\label{fig:K_empty}
\end{figure}

\begin{align*}
\langle \phi, \iota'_{k_1\ldots k_n}(C^{k_1 k_2}_{k_\alpha}\circ C^{k_\alpha k_3}_{k_\beta}\circ \omicron^{k_{\alpha} k_{\beta}})\rangle=
({\phi}^{k'_{\widetilde\alpha} k'_{\beta}})^\dagger_{A_{\beta}^+ A_4^+\ldots A_n^+, A_{\beta}^- A_4^{-}\ldots A_n^{-}} C_{A_{\widetilde \alpha}^+ A_1^+}^{A_{\beta}^+} & C_{A_2^+ A_3^+}^{A_{\widetilde\alpha}^+} C_{A_{\widetilde\alpha}^- A_1^-}^{A_{\beta}^-} C_{A_2^- A_3^-}^{A_{\widetilde\alpha}^-}\\ C_{A_1}^{A_1^+ A_1^-} \ldots C_{A_n}^{A_n^+ A_n^-} C^{A_2 A_3}_{A_{\widetilde\alpha}} {C}^{A_{\widetilde\alpha} A_1}_{A_\beta} (\omicron^{k_{\widetilde\alpha} k_{\beta}})^{A_{\beta} A_4 \ldots A_n} = {\color{blue} C_{A_2^+ A_3^+}^{A_{\widetilde\alpha}^+}  C_{A_2^- A_3^-}^{A_{\widetilde\alpha}^-} C_{A_2}^{A_2^+ A_2^-} }&{\color{blue} C_{A_3}^{A_3^+ A_3^-}  C^{A_2 A_3}_{A_{\widetilde\alpha}} } \\ {\color{red} C_{A_{\widetilde\alpha}^+ A_1^+}^{A_{\beta}^+} C_{A_{\widetilde\alpha}^- A_1^-}^{A_{\beta}^-}  C_{A_1}^{A_1^+ A_1^-}  {C}^{{A_{\widetilde\alpha}} A_1}_{A_\beta}} ({\phi}^{k'_{\widetilde\alpha} k'_{\beta}})^\dagger_{A_{\beta}^+ A_4^+\ldots A_n^+, A_{\beta}^- A_4^{-}\ldots A_n^{-}} C_{A_4}^{A_4^+ A_4^-} \ldots C_{A_n}^{A_n^+ A_n^-} & (\omicron^{k_{\widetilde\alpha} k_{\beta}})^{A_{\beta} A_4 \ldots A_n}
\end{align*}

Using lemma \ref{nonzero} we obtain, that for some $\rho_1\not=0$
\[ {\color{blue} C_{A_2^+ A_3^+}^{A_{\widetilde\alpha}^+} C_{A_2^- A_3^-}^{A_{\widetilde\alpha}^-} C_{A_2}^{A_2^+ A_2^-} C_{A_3}^{A_3^+ A_3^-} C^{A_2 A_3}_{A_{\widetilde\alpha}}} =\left\{\begin{array}{ll}
                    0, & k_{\widetilde\alpha}>j_{\widetilde\alpha}^++j^-_{\widetilde\alpha}\\
                     \rho_1 {\color{DarkOrchid} C_{A_{\widetilde\alpha}}^{A_{\widetilde\alpha}^+ A_{\widetilde\alpha}^-}}, & k_{\widetilde\alpha}=j_{\widetilde\alpha}^++j^-_{\widetilde\alpha}
                   \end{array}\right.
\] 
Applying this lemma second time we get, that for some $\rho_2\not=0$
\[ {\color{red} C_{A_{\widetilde\alpha}^+ A_1^+}^{A_{\beta}^+} C_{A_{\widetilde\alpha}^- A_1^-}^{A_{\beta}^-}}{\color{DarkOrchid} C_{A_{\widetilde\alpha}}^{A_{\widetilde\alpha}^+ A_{\widetilde\alpha}^-}} {\color{red} C_{A_1}^{A_1^+ A_1^-}  {C}^{{A_{\widetilde\alpha}} A_1}_{A_\beta}} =\left\{\begin{array}{ll}
                    0, & k_\beta>j_\beta^++j^-_\beta\\
                    \rho_2 C_{A_\beta}^{A_\beta^+ A_\beta^-}, & k_\beta=j_\beta^++j^-_\beta
                   \end{array}\right.\]
Finally, for $\rho:=\rho_1 \rho_2 \not= 0$ :
\begin{equation}\label{level_2_empty_eq}
 \langle \phi, \iota'_{k_1\ldots k_n}(C^{k_2 k_3}_{k_{\widetilde\alpha}}\circ C^{k_{\widetilde\alpha} k_1}_{k_\beta} \omicron^{k_{\widetilde\alpha} k_{\beta}})\rangle=\left\{\begin{array}{ll}
                    0, & k_{\widetilde\alpha}>k_{\widetilde\alpha}'\ {\rm or}\ k_\beta>k_\beta'\\
                    \rho\langle \phi^{k'_{\widetilde\alpha} k'_{\beta}}, \iota'_{k'_\beta \ldots k_n}(\omicron^{k'_{\widetilde\alpha} k'_{\beta}})\rangle, & k_{\widetilde\alpha}=k_{\widetilde\alpha}'\ {\rm and}\ k_\beta=k_\beta'\\
  *, & {\rm otherwise}
                   \end{array}\right.
\end{equation}
\item Now we use formula just obtained \eqref{level_2_empty_eq}  to calculate the sum \eqref{level_2_empty_sum}.

First notice that in this case $k_\beta\geq k_3$. Moreover, the elements in the sum \eqref{level_2_empty_sum} with $k_\beta>k_3$ are vanishing \eqref{level_2_empty_eq} and the sum is actually over $L'$:
\[
 \langle \phi, \iota'_{k_1\ldots k_n}(\omicron)\rangle = \sum_{(k_{\widetilde\alpha}, k_{\beta})\in L'} \langle \phi, \iota'_{k_1\ldots k_n}({C}^{k_2 k_3}_{k_{\widetilde \alpha}} \circ C^{k_1 k_{\widetilde \alpha}}_{k_\beta}\circ \omicron^{k_{\alpha} k_{\beta}})\rangle
\]
However from the definition of $k_{\widetilde \alpha}'$ follows that
\[
 \langle \phi, \iota'_{k_1\ldots k_n}(\omicron)\rangle = \sum_{k_{\widetilde\alpha}\geq k_{\widetilde\alpha}'} \langle \phi, \iota'_{k_1\ldots k_n}({C}^{k_2 k_3}_{k_{\widetilde \alpha}} \circ C^{k_1 k_{\widetilde \alpha}}_{k_3}\circ \omicron^{k_{\widetilde\alpha} k_3})\rangle
\]
Finally, using \eqref{level_2_empty_eq} we obtain:
\[
 \langle \phi, \iota'_{k_1\ldots k_n}(\omicron)\rangle = \rho \langle \phi^{k'_{\widetilde\alpha} k_3}, \iota'_{k_3 \ldots k_n}(\omicron^{k'_{\widetilde\alpha} k_3})\rangle
\]
and 
\[
 \langle \phi, \iota'_{k_1\ldots k_n}(\omicron)\rangle \not= 0.
\]
\end{enumerate}
\end{enumerate}

\vskip 0.5cm
\noindent{\bf Acknowledgement:} 
 The work was partially supported by the grants {\bf N N202 104838}, {\bf N N202 287538}, and {\bf 182/N-QGG/2008/0} (PMN) of Polish Ministerstwo Nauki i Szkolnictwa Wy\.zszego. Wojciech Kami{\'n}ski is partially supported by grant {\bf N N202 287538}. Marcin
Kisielowski acknowledges financial support from the project ”International PhD
Studies in Fundamental Problems of Quantum Gravity and Quantum Field Theory” of Foundation
for Polish Science, co-financed from the programme {\bf IE OP 2007-2013} within European Regional
Development Fund.

\end{document}